%
\documentclass[final]{ws-idaqp}
\usepackage{graphicx}
\usepackage{graphicx}

\usepackage{amsmath}
\usepackage{amssymb}
\usepackage{amsfonts}
\usepackage{graphicx}
\usepackage{color}
\usepackage{datetime}

\usepackage{hyperref}

\setlength{\textwidth}{160truemm}
\setlength{\textheight}{225truemm}
\setlength{\oddsidemargin}{5truemm}
\setlength{\evensidemargin}{5mm}
\setlength{\topmargin}{0truemm}

\usepackage{framed} 
\usepackage{graphicx} 
\usepackage{amsmath} 
\usepackage{amssymb}  

%
%
%

\newcommand{\sinc}{\mathrm{sinc}}
\newcommand{\tanc}{\mathrm{tanc}}
\newcommand{\tanhc}{\mathrm{tanhc}}

\DeclareMathAlphabet{\bit}{OML}{cmm}{b}{it}
\def\<{\leqslant}           
\def\>{\geqslant}           

\def\d{\partial}

\def\Re{\mathrm{Re}}   
\def\Im{\mathrm{Im}}   
\def\rprod{\mathop{\overrightarrow{\prod}}}

\def\col{\mathrm{vec}}   
\def\cH{\mathcal{H}}   
\def\mZ{{\mathbb Z}}    
\def\mR{{\mathbb R}}    
\def\mC{\mathbb{C}}    

\def\Tr{\mathrm{Tr}}       
\def\rT{{\rm T}}        
\def\rF{{\rm F}}        
\def\diam{\diamond}       



\def\bTheta{\bit{\Theta}}
\def\bitJ{\bit{J}}
\def\bitI{\bit{I}}

\def\bP{\mathbf{P}}    
\def\bE{\mathbf{E}}    


\def\[[[{[\![\![}
\def\]]]{]\!]\!]}


\def\re{{\rm e}}        
\def\rd{{\rm d}}        



\def\bJ{\mathbf{J}}

\def\br{\mathbf{r}}
\def\x{\times}
\def\ox{\otimes}
\def\op{\oplus}

\def\fF{\mathfrak{F}}
\def\fT{\mathfrak{T}}

\def\fH{\mathfrak{H}}
\def\fS{\mathfrak{S}}

\def\sF{\mathsf{F}}

\def\sL{\mathsf{L}}
\def\sK{\mathsf{K}}
\def\sV{\mathsf{V}}

\def\cF{\mathcal{F}}

\def\cD{\mathcal{D}}

\def\cM{\mathcal{M}}

\def\cU{\mathcal{ U}}
\def\cC{\mathcal{ C}}
\def\cR{\mathcal{ R}}

\def\cI{\mathcal{I}}
\def\cP{\mathcal{P}}

\def\cA{\mathcal{ A}}
\def\cB{\mathcal{ B}}
\def\cE{\mathcal{ E}}

\def\cS{{\mathcal S}}
\def\cT{\mathcal{T}}

\def\mT{{\mathbb T}}
\def\mZ{{\mathbb Z}}

\def\veps{\varepsilon}
\def\eps{\epsilon}
\def\ups{\upsilon}
\def\Ups{\Upsilon}


\begin{document}
\markboth{I.G.Vladimirov, I.R.Petersen}
{Risk-sensitive criteria for translation invariant linear quantum stochastic  networks}

\catchline{}{}{}{}{}

\title{\Large Infinite-horizon risk-sensitive performance criteria for translation invariant networks of linear quantum stochastic systems}

\author{Igor G. Vladimirov\footnote{Corresponding author},
\qquad
Ian R. Petersen$^\dagger$}

\address{
School of Engineering,\\
College of Engineering and Computer Science,\\
Australian National University, Canberra,\\
Acton, ACT 2601, Australia\\
$^*$igor.g.vladimirov@gmail.com\\
$^\dagger$i.r.petersen@gmail.com
}

\maketitle

\thispagestyle{empty}
\begin{dedication}
\emph{Dedicated to the memory of Professor Robin Lyth Hudson}
\end{dedication}
\begin{abstract}
This paper is concerned with 
networks of identical linear quantum stochastic systems which interact with each other and external bosonic fields in a translation invariant fashion. The systems
are associated with sites of a multidimensional lattice and are governed by coupled linear
quantum stochastic differential equations (QSDEs). The block Toeplitz  coefficients of these QSDEs   are specified by the energy and coupling matrices which quantify the Hamiltonian and coupling operators for the component systems. We discuss the invariant Gaussian quantum state of the network when it satisfies a stability condition and is driven by statistically independent vacuum fields. A quadratic-exponential functional (QEF) is considered as a risk-sensitive performance criterion for a finite fragment of the network over a bounded time interval. This functional involves a quadratic function of dynamic variables of the component systems with a block Toeplitz weighting matrix.
Assuming the invariant state, we study the spatio-temporal asymptotic rate of the QEF per unit time and per lattice site in the thermodynamic limit of unboundedly growing time horizons and fragments of the lattice. A spatio-temporal frequency-domain formula is obtained for the QEF rate in terms of two spectral functions associated with the real and imaginary parts of the invariant quantum covariance kernel of the network variables. A homotopy method and asymptotic expansions for evaluating the QEF rate are also discussed.

\end{abstract}

\keywords{Linear quantum stochastic network;
translation invariance;
Gaussian quantum state;
quadratic-exponential functional;
spatio-temporal growth rate.
}

\ccode{AMS Subject Classification:
81S22, 
81S25, 
81P16, 
81R15,  
47B35,  
47L80,  
15B05,  
93E15,  	
37L40,      
60G15,   	
93B51.      
%
}

\section{Introduction}
\label{intro}

Translational symmetries, which are  ubiquitous in the physical world, play an important role in collective properties of large-scale networks of interacting systems. For example, thermodynamic and mechanical characteristics of crystalline solids (including the heat capacity and speed of sound) are substantially affected by spatial periodicity in the arrangements   of atoms  in such states of matter and translation invariance of their mutual interaction, which is taken  into account by the phonon theory\cite{S_1990}.

Translation invariant interconnections  are also used in quantum metamaterials\cite{QSMGH_2011,RZSN_2008,Za_2011,Zh_2011}, where coupled identical quantum systems form one, two or three-dimensional periodic arrays\cite{Za_2012}. The resulting quantum composite system is  effectively homogeneous (in the sense of translational symmetries) on the scale of relevant wavelengths.  These artificial materials aim to unveil and exploit qualitatively new properties of light-matter interaction, such as in artificial crystals of atoms trapped at nodes of an optical lattice which can be controlled by external fields and used for entanglement generation\cite{CBFFFRCIP_2013} or as a quantum memory\cite{HCHJ13,NDMRLLWJ_2010,YJ14}. Similar architectures (in the form of one-dimensional chains) are present in cascaded quantum systems for generating pure Gaussian states\cite{KY12:pra,MWPY_2014,Y_2012}.

The present paper is concerned with networks of identical linear quantum stochastic systems\cite{NY_2017,P_2017}, or open quantum harmonic oscillators (OQHOs),   which interact with each other and external bosonic fields in a translation invariant fashion. The systems are associated with sites of a multidimensional lattice and are governed by coupled linear quantum stochastic differential equations (QSDEs) driven by quantum Wiener processes in the sense of the Hudson-Parthasarathy  calculus\cite{HP_1984,P_1992,P_2015}. In accordance with the translation invariance of the quantum network (with respect to the additive group structure of the lattice), the coefficients of these QSDEs   are organised as block Toeplitz matrices and are specified by the energy and coupling parameters which quantify the Hamiltonian and coupling operators for the component systems. This parameterization secures the fulfillment of physical realizability (PR) conditions, which extend those for OQHOs with a finite number of degrees of freedom\cite{JNP_2008,SP_2012}  and are similar to the network counterpart from Ref. \refcite{VP_2014} using the spatial Fourier transforms (SFTs).

We employ the homomorphism between the algebra of block Toeplitz matrices, the corresponding  convolution algebra of matrix-valued maps on the lattice and the algebra of SFTs with the pointwise multiplication over an appropriately dimensioned torus. This machinery represents system theoretic operations (such as concatenation and feedback interconnection\cite{GJ_2009,JG_2010})  over translation invariant quantum networks on a common carrier lattice in terms of algebraic operations over their spatio-temporal transfer functions and energy parameters. Network interconnections  arise in quantum control settings, where performance specifications include stability and minimization of cost functionals\cite{ZJ_2012}.

Under a stability condition in the spatial frequency domain, the network  has an invariant multipoint Gaussian quantum state\cite{P_2010,VPJ_2018a} in the case of statistically independent vacuum input fields. We consider a quadratic function of the network variables of interest with a block Toeplitz weighting matrix for a finite fragment of the lattice and over a bounded time interval. The tail probabilities for this self-adjoint quantum variable admit upper bounds involving its exponential moments, which, similarly to Ref. \refcite{VPJ_2018a}, lead to a quadratic exponential functional (QEF) as a risk-sensitive performance criterion for finite fragments of the network over finite time horizons. 

The QEF is a quantum mechanical counterpart of the cost functionals used in classical risk-sensitive control\cite{BV_1985,J_1973,W_1981} which has links with minimax linear-quadratic-Gaussian control\cite{DJP_2000,P_2006,PJD_2000} addressing the issue of system robustness against  statistical uncertainties with a relative entropy description. The latter has its analogue in terms of quantum relative entropy\cite{OW_2010} leading to similar robustness properties\cite{VPJ_2018b} in the context of risk-sensitive quantum feedback control and filtering problems\cite{B_1996,J_2004,J_2005,YB_2009}, some of which employ a different yet related\cite{VPJ_2019a} class of time-ordered exponentials.

Assuming the invariant state of the network, we study the spatio-temporal asymptotic rate of the QEF per unit time and per lattice site in the thermodynamic limit\cite{R_1978} of unboundedly growing time horizons and fragments of the network. The resulting spatio-temporal frequency domain formula for the QEF rate  is organised as an integral of the log-determinant  of a matrix-valued function over the product of the multidimensional torus with the frequency axis. The integrand     involves two spectral functions, which are  associated with the real and imaginary parts of the invariant quantum covariance kernel of the network variables and form their quantum spectral density. One of these matrix-valued spectral functions, originating from the two-point commutator kernel,  enters the frequency-domain representation of the QEF rate in composition with trigonometric functions\cite{H_2008}. Combined with the multivariate nature of the integral,  this makes the evaluation of the QEF rate inaccessible to the standard application of the residue theorem. We obtain a differential equation and an asymptotic expansion for the QEF rate as a function of the risk sensitivity parameter, which can be used for its numerical computation, similar to the homotopy methods for solving parameter dependent algebraic equations\cite{MB_1985}.

Continuing the development of methods for computing the QEFs,    this paper employs a number of results from  a series of recent publications on Lie-algebraic techniques\cite{VPJ_2019a}, parametric randomization\cite{VPJ_2018c} and quantum Karhunen-Loeve  expansions\cite{VPJ_2019b,VJP_2019} developed for this purpose. These results have led to an integral operator representation of the QEF\cite{VPJ_2019c} and a frequency-domain formula\cite{VPJ_2020b} for their infinite time horizon rates for OQHOs with finitely many degrees of freedom in Gaussian quantum states, which has been extended to more general Gaussian quantum processes in Ref. \refcite{VPJ_2021}. In addition to their relevance to quantum risk-sensitive control,  these approaches have deep connections with operator exponential structures studied in mathematical physics and quantum probability (for example, in the context of operator algebras\cite{AB_2018}, moment-generating functions for quadratic Hamiltonians\cite{PS_2015} and the quantum L\'{e}vy area\cite{CH_2013,H_2018}).

The paper is organised as follows.
Section~\ref{sec:net} specifies the class of translation invariant quantum networks being considered.
Section~\ref{sec:PR} represents PR conditions for the network in the spatial frequency domain.
Section~\ref{sec:par} provides a parameterization of the network QSDEs in terms of the energy and coupling matrices and outlines their computation for interconnections of networks.
Section~\ref{sec:inv} considers the invariant Gaussian state of the network, satisfying a stability condition and  driven by vacuum fields.
Section~\ref{sec:QEF} specifies QEFs for finite fragments of the network over bounded time intervals and clarifies their role for large deviations estimates for network trajectories.
Sections~\ref{sec:temp} and \ref{sec:spat}  establish the temporal and spatio-temporal QEF growth rates.
Section~\ref{sec:homo}  discusses the computation of the QEF rate using homotopy and asymptotic expansion techniques.
Section~\ref{sec:conc} makes concluding remarks.
\ref{sec:Toeplitz} to \ref{sec:aver}
provide
subsidiary material (on block Toeplitz matrices, and averaging for trace-analytic functionals of such matrices and integral operators) and some of the particularly long proofs.

\section{Translation Invariant Quantum Network}
\label{sec:net}

We consider a network of identical linear quantum stochastic systems  at sites of a $\nu$-dimensional integer lattice $\mZ^\nu$. For any $j\in \mZ^\nu$, the $j$th component system is a multi-mode open quantum harmonic oscillator (OQHO) with an even number $n$ of internal dynamic variables which are time-varying self-adjoint operators on (a dense domain of) a Hilbert space $\fH$. These system variables are assembled into a vector\footnote{vectors are organised as columns unless specified otherwise} $X_j(t)$ (the time argument $t\> 0$ will often be omitted for brevity) and act initially (at $t=0$) on a copy $\fH_j$ of a common complex separable Hilbert space.   It is assumed that they satisfy the canonical commutation relations (CCRs)
\begin{equation}
\label{Xcomm}
    [X_j(t), X_k(t)^\rT] = 2i\delta_{jk}\Theta,
    \qquad
    j,k\in \mZ^\nu,
    \quad
    t\>0,
\end{equation}
where the transpose  $(\cdot)^\rT$
applies to matrices and vectors of operators as if the latter were scalars,
$i:=\sqrt{-1}$ is the imaginary unit, $\delta_{jk}$ is the Kronecker delta,
and $\Theta$ is a nonsingular real antisymmetric matrix of order $n$. Here,
$[\alpha, \beta^\rT] := ([\alpha_a, \beta_b])_{1\< a\< r, 1\< b\< s} = \alpha\beta^\rT - (\beta\alpha^\rT)^\rT $ is the matrix of commutators $[\alpha_a, \beta_b] = \alpha_a \beta_b-\beta_b\alpha_a$ for vectors $\alpha:= (\alpha_a)_{1\< a\< r}$,  $\beta:= (\beta_b)_{1\< b\< s}$ formed from linear operators.

In particular, if the internal variables of the component system are the quantum mechanical positions and  momenta\cite{S_1994}  $q_1, \ldots, q_{n/2}$ and $p_1:= -i\d_{q_1}, \ldots, p_{n/2}:= -i\d_{q_{n/2}}$ on the  Schwartz space\cite{V_2002}, then the CCR matrix takes the form $\Theta = \frac{1}{2}\bJ\ox I_{n/2}$, where $\ox$ is the Kronecker product, the matrix
\begin{equation}
\label{bJ}
    \bJ :=
{\begin{bmatrix}
  0 & 1\\
  -1 & 0
\end{bmatrix} }
\end{equation}
spans the subspace of antisymmetric matrices of order $2$,   and $I_r$ is the identity matrix of order $r$. However, this special structure of $\Theta$ is not assumed in the general case considered in what follows.

In addition to the internal variables, the $j$th OQHO has multichannel  input and output bosonic fields $W_j$, $Y_j$ which consist of $m$  and $r$ time-varying  self-adjoint quantum variables, respectively (the dimensions  $m$, $r$ are even and satisfy $r\< m$).  The input field $W_j$ is a quantum Wiener process on a symmetric Fock space $\fF_j$.  The network-field space   has the tensor-product structure $\fH := \ox_{j\in \mZ^\nu}(\fH_j \ox \fF_j)$, with the composite Fock space $\fF:= \ox_{j\in \mZ^\nu} \fF_j$ accommodating the input fields. These fields satisfy the two-point CCRs
\begin{equation}
\label{Wcomm}
  [W_j(s), W_k(t)^\rT]
  =
  2i\delta_{jk}
  \min(s,t)
  J_m,
      \qquad
    j,k\in \mZ^\nu,
    \
    s,t\>0,
\end{equation}
where
\begin{equation}
\label{Jm}
    J_m := \bJ \ox I_{m/2}
    =
    \begin{bmatrix}
      0 & I_{m/2}\\
      - I_{m/2} & 0
    \end{bmatrix}
\end{equation}
is an orthogonal  real antisymmetric matrix of order $m$ defined in terms of (\ref{bJ}), so that $J_m^2 = -I_m$. The right-hand side of (\ref{Wcomm}) vanishes at $s=0$ or $t=0$ since the initial input field operators act as the identity operator on $\fF$, which commutes with any operator. Due to the continuous tensor-product     structure\cite{PS_1972} of the Fock space filtration,  the relation (\ref{Wcomm}) is equivalent to its fulfillment for all $s=t\>0$, whose incremental form is given by
\begin{align}
\nonumber
    \rd [W_j, W_k^\rT]
    & =
    [\rd W_j, W_k^\rT]
    +
    [W_j, \rd W_k^\rT]
    +
  [\rd W_j, \rd W_k^\rT]\\
\label{dWcomm}
  & =
    [\rd W_j, \rd W_k^\rT]
    =
  2i\delta_{jk}
  J_m
  \rd t.
\end{align}
Here, use is also made of the quantum Ito lemma\cite{HP_1984} and the property of
the future-pointing Ito increments of the input quantum Wiener processes to commute with adapted processes (in the sense of the filtration of the network-field space $\fH$).
In particular,
\begin{equation}
\label{WXYdWcomm}
    [W_j(s), \rd W_k(t)^{\rT}] = 0,
    \qquad
    [X_j(s), \rd W_k(t)^{\rT}] = 0,
    \qquad
    [Y_j(s), \rd W_k(t)^{\rT}] = 0
\end{equation}
for all     $j,k\in \mZ^\nu$,     $t\> s\> 0$.
We model the Heisenberg evolution of the network by a denumerable
set of linear quantum stochastic  differential equations (QSDEs)
\begin{align}
\label{dXj}
  \rd X_j
  & =
  \sum_{k \in \mZ^\nu} (A_{j-k}X_k \rd t + B_{j-k} \rd W_k), \\
\label{dYj}
  \rd Y_j
  & =
  \sum_{k \in \mZ^\nu} (C_{j-k}X_k \rd t + D_{j-k} \rd W_k),
  \qquad
  j \in \mZ^\nu,
\end{align}
which are coupled to each other and driven by the input fields in a translation invariant fashion. Their coefficients
are specified by the matrices
\begin{equation}
\label{ABCD}
    A_\ell \in \mR^{n\x n},
    \qquad
    B_\ell\in \mR^{n\x m},
    \qquad
    C_\ell\in \mR^{r\x n},
    \qquad
    D_\ell \in \mR^{r\x m},
\end{equation}
which depend on the relative location $\ell \in \mZ^\nu$ of the lattice sites. For what follows, these  matrices are assumed to be absolutely summable over $\ell\in\mZ^\nu$, which is equivalent to
\begin{equation}
\label{ABCDsum}
    \sum_{\ell \in \mZ^\nu}
    \left\|
    {\begin{bmatrix}
      A_\ell & B_\ell\\
      C_\ell & D_\ell
    \end{bmatrix}}
    \right\|
    < +\infty,
\end{equation}
where $\|\cdot \|$ is the operator norm   (the largest singular value) of a matrix. The particular choice of a matrix norm does not affect the validity of (\ref{ABCDsum}).

The set of QSDEs (\ref{dXj}), (\ref{dYj}) can be represented formally in terms of the augmented vectors $X:= (X_k)_{k \in \mZ^\nu}$, $W:= (W_k)_{k \in \mZ^\nu}$, $Y:= (Y_k)_{k \in \mZ^\nu}$ of the internal variables and external fields of the network as
\begin{align}
\label{dX}
  \rd X
  & =
  AX\rd t + B \rd W, \\
\label{dY}
  \rd Y
  & =
  CX \rd t + D\rd W,
\end{align}
where $A:= (A_{j-k})_{j,k\in \mZ^\nu} \in \fT_{n,n}$, $B:= (B_{j-k})_{j,k\in \mZ^\nu} \in \fT_{n,m}$, $C:= (C_{j-k})_{j,k\in \mZ^\nu}\in \fT_{r,n}$, $D:= (D_{j-k})_{j,k\in \mZ^\nu}\in \fT_{r,m}$ are appropriately dimensioned real block Toeplitz matrices with finite norms $\|A\|_1$, $\|B\|_1$, $\|C\|_1$, $\|D\|_1$ in view of (\ref{ABCDsum}), (\ref{f1}); see \ref{sec:Toeplitz}.
The absolute summability condition secures well-posedness of
the spatial Fourier transforms (SFTs)
\begin{equation}
\label{cABCD}
    {\begin{bmatrix}
      \cA(\sigma) & \cB(\sigma)\\
      \cC(\sigma) & \cD(\sigma)
    \end{bmatrix}}
    :=
    \sum_{\ell\in \mZ^\nu}
    \re^{-i\ell^\rT \sigma}
        {\begin{bmatrix}
      A_\ell & B_\ell\\
      C_\ell & D_\ell
    \end{bmatrix}},
    \qquad
    \sigma \in \mT^\nu,
\end{equation}
so that
$\cA$, $\cB$, $\cC$, $\cD$ are appropriately dimensioned complex matrix-valued functions, continuous and $2\pi$-periodic over their $\nu$ variables.
The  matrices in (\ref{ABCD}) are recovered from (\ref{cABCD}) through the inverse SFT as
\begin{equation*}
\label{inv}
{\begin{bmatrix}
      A_\ell & B_\ell\\
      C_\ell & D_\ell
    \end{bmatrix}}
    =
    \frac{1}{(2\pi)^\nu}
    \int_{\mT^\nu}
    \re^{i\ell^\rT \sigma}
    {\begin{bmatrix}
      \cA(\sigma) & \cB(\sigma)\\
      \cC(\sigma) & \cD(\sigma)
    \end{bmatrix}}
    \rd \sigma.
\end{equation*}
Since the matrices (\ref{ABCD}) are real,
their SFTs $\cA$, $\cB$, $\cC$, $\cD$ are Hermitian in the sense that  $\overline{\cA(\sigma)} = \cA(-\sigma)$ for all $\sigma \in \mT^\nu$ (and similarly for $\cB$, $\cC$, $\cD$), and hence,
\begin{align}
\label{cAB*}
    \cA(\sigma)^*
    & = \cA(-\sigma)^\rT,
    \qquad\!\!\!
    \cB(\sigma)^* = \cB(-\sigma)^\rT,    \\
\label{cCD*}
    \cC(\sigma)^*
    & = \cC(-\sigma)^\rT,
    \qquad
    \cD(\sigma)^* = \cD(-\sigma)^\rT
\end{align}
for all $\sigma\in\mT^\nu$. The right-hand sides of (\ref{cAB*}), (\ref{cCD*}) are the SFTs of the matrices $A_{-\ell}^\rT$,  $B_{-\ell}^\rT$, $C_{-\ell}^\rT$, $D_{-\ell}^\rT$ which constitute $A^\rT$, $B^\rT$, $C^\rT$, $D^\rT$,  respectively. Dynamic properties of the translation invariant network can be represented in the spatial frequency domain using the SFTs $\cA$, $\cB$, $\cC$, $\cD$. Such properties include the preservation of commutation relations.

\section{Physical Realizability Conditions in the Spatial Frequency Domain}
\label{sec:PR}

Similarly to OQHOs with a finite number of external field channels and internal dynamic variables,  the matrices (\ref{ABCD}) of the network QSDEs (\ref{dXj}), (\ref{dYj})  satisfy physical realizability (PR) conditions which reflect the preservation of the CCRs (\ref{Xcomm}) together with
\begin{equation}
\label{XYcomm}
    [X_j(t), Y_k(s)^{\rT}] = 0,
    \qquad
    j,k\in \mZ^\nu,\
    t\> s\> 0.
\end{equation}
The fulfillment of (\ref{XYcomm}) at $s=t=0$ is secured by the commutativity of operators, acting on different initial and Fock spaces $\fH_j$, $\fF_k$ and appropriately extended to $\fH_j\ox \fF_k$. An additional PR condition\footnote{which is important in the context of concatenating quantum networks as input-output maps, considered in Section~\ref{sec:par}} comes from the requirement that the commutation structure of the output fields of the network is similar to that of the input fields in (\ref{Wcomm}), (\ref{dWcomm}):
\begin{equation}
\label{Ycomm}
  [Y_j(s), Y_k(t)^\rT]
  =
  2i\delta_{jk}
  \min(s,t)
  J_r,
      \qquad
    j,k\in \mZ^\nu,
    \
    s,t\>0,
\end{equation}
where $J_r$ is defined according to (\ref{Jm}). The following theorem represents the PR conditions in the spatial frequency domain as a network counterpart of the previous results for OQHOs with a finite number of variables\cite{JNP_2008,SP_2012} and extends Ref. \refcite{VP_2014}.

\begin{theorem}
\label{th:PR}
The network QSDEs (\ref{dXj}), (\ref{dYj}) preserve the CCRs (\ref{Xcomm}), (\ref{XYcomm}), (\ref{Ycomm}) if and only if the SFTs (\ref{cABCD}) satisfy
\begin{align}
\label{PR1}
  \cA(\sigma) \Theta + \Theta \cA(\sigma)^* + \cB(\sigma) J_m \cB(\sigma)^*
  & = 0,\\
\label{PR2}
  \Theta \cC(\sigma)^*  + \cB(\sigma) J_m \cD(\sigma)^*
  & = 0,\\
\label{PR3}
  \cD(\sigma) J_m \cD(\sigma)^*
  & = J_r
\end{align}
for all $\sigma \in \mT^\nu$. \hfill$\square$
\end{theorem}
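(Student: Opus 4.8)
The plan is to differentiate the prescribed two-point commutators (\ref{Xcomm}), (\ref{XYcomm}), (\ref{Ycomm}) in time along the QSDEs (\ref{dX})--(\ref{dY}) by means of the quantum It\^{o} lemma, using that the forward It\^{o} increments of $W$ commute with adapted processes (cf. (\ref{WXYdWcomm})); this reduces each preservation requirement to an algebraic identity between block Toeplitz matrices over $\mZ^\nu$, which is then transferred to the spatial frequency domain through the SFT, where products of block Toeplitz matrices become pointwise products of their symbols. Throughout, let $\bTheta$, $\bitJ_m$, $\bitJ_r$ denote the block-diagonal Toeplitz matrices over $\mZ^\nu$ with all diagonal blocks equal to $\Theta$, $J_m$, $J_r$, respectively; their SFTs are the constant matrices $\Theta$, $J_m$, $J_r$.

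Since the right-hand side of (\ref{Xcomm}) does not depend on $t$, its preservation for all $t\>0$ is equivalent to $\rd[X,X^\rT]=0$. By (\ref{dX}), the Leibniz rule and the quantum It\^{o} lemma, $\rd[X,X^\rT]=[\rd X,X^\rT]+[X,\rd X^\rT]+[\rd X,\rd X^\rT]$; here the cross terms vanish, $[B\rd W,X^\rT]=0=[X,\rd W^\rT B^\rT]$, by (\ref{WXYdWcomm}) (the constant matrix $B$ factoring out of the commutators), and the It\^{o} correction is $[\rd X,\rd X^\rT]=B[\rd W,\rd W^\rT]B^\rT=2i\,B\bitJ_m B^\rT\rd t$ by (\ref{dWcomm}). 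Hence $\rd[X,X^\rT]=\big(A[X,X^\rT]+[X,X^\rT]A^\rT+2i\,B\bitJ_m B^\rT\big)\rd t$, and substituting the prescribed constant value $[X,X^\rT]=2i\bTheta$ turns $\rd[X,X^\rT]=0$ into
\begin{equation}
\label{PR1blk}
A\bTheta+\bTheta A^\rT+B\bitJ_m B^\rT=0.
\end{equation}
Conversely, (\ref{PR1blk}) makes $M(t):=[X(t),X(t)^\rT]-2i\bTheta$ satisfy the linear ODE $\dot M=AM+MA^\rT$ with $M(0)=0$ (the initial CCRs being given), whence $M\equiv0$; thus (\ref{PR1blk}) is equivalent to preservation of (\ref{Xcomm}).

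The same computation applied to $[X,Y^\rT]$, using in addition $[\rd W,Y^\rT]=0$, gives a drift whose $[X,Y^\rT]$-independent part equals $2i\big(\bTheta\, C^\rT+B\bitJ_m D^\rT\big)$ (after inserting $[X,X^\rT]=2i\bTheta$); hence, granted preservation of (\ref{Xcomm}), the linear-ODE uniqueness argument makes $[X,Y^\rT]\equiv0$ equivalent to $\bTheta\, C^\rT+B\bitJ_m D^\rT=0$. Finally, once $[X,Y^\rT]=0$, one finds the pure drift $\rd[Y,Y^\rT]=D[\rd W,\rd W^\rT]D^\rT=2i\,D\bitJ_m D^\rT\rd t$, which must match the differential $2i\bitJ_r\,\rd t$ of the right-hand side of (\ref{Ycomm}) at coinciding times, i.e. $D\bitJ_m D^\rT=\bitJ_r$ (these three conditions being used in the order listed, each relying on the commutators validated by the previous one). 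Applying the SFT to the three block Toeplitz identities and invoking the homomorphism between the block Toeplitz algebra and the algebra of symbols under pointwise multiplication (see \ref{sec:Toeplitz}), products pass to pointwise products, $\bTheta$, $\bitJ_m$, $\bitJ_r$ pass to the constants $\Theta$, $J_m$, $J_r$, and --- since the symbol of $A^\rT$ (with blocks $A_{-\ell}^\rT$) is $\cA(-\sigma)^\rT=\cA(\sigma)^*$ by (\ref{cAB*})--(\ref{cCD*}), and likewise for $B^\rT$, $C^\rT$, $D^\rT$ --- the identities become precisely (\ref{PR1})--(\ref{PR3}) for every $\sigma\in\mT^\nu$. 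Every implication above being reversible, this establishes the claimed equivalence.

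I expect the main obstacle to be the rigorous handling of the \emph{denumerable} family of mutually coupled QSDEs: justifying the termwise application of the quantum It\^{o} lemma and of the commutativity of forward increments with adapted processes across infinitely many sites, and the interchange of the infinite lattice summations with these operations and with the SFT. All of this rests on the absolute summability (\ref{ABCDsum}) and the resulting boundedness and continuity of the symbols, which keep the block Toeplitz operators bounded and the semigroup $\re^{At}$ well defined. A secondary technical point is the careful bookkeeping of the drift terms for $[X,Y^\rT]$ and $[Y,Y^\rT]$ and the clean running of the linear-ODE uniqueness argument, in the prescribed order (\ref{PR1}), then (\ref{PR2}), then (\ref{PR3}).
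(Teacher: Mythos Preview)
Your proposal is correct and follows essentially the same route as the paper: differentiate the commutators via the quantum It\^{o} lemma, reduce preservation to the block Toeplitz identities $A\bTheta+\bTheta A^\rT+B\bitJ_m B^\rT=0$, $\bTheta C^\rT+B\bitJ_m D^\rT=0$, $D\bitJ_m D^\rT=\bitJ_r$, and pass to the spatial frequency domain via the SFT homomorphism. The paper carries this out site-by-site in indices $j,k$ rather than in your aggregated notation, but the computations are the same.

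There is one step you have omitted that the paper treats explicitly: the CCRs (\ref{XYcomm}) and (\ref{Ycomm}) are \emph{two-time} relations ($t\geq s\geq 0$), whereas your It\^{o} differentiation and ODE argument only address the equal-time case $s=t$. The paper closes this gap by using the variation-of-constants formula $X(t)=\re^{(t-s)A}X(s)+\int_s^t\re^{(t-\tau)A}B\,\rd W(\tau)$ together with adaptedness to show $[X(t),Y(s)^\rT]=\re^{(t-s)A}[X(s),Y(s)^\rT]$, so the two-time version of (\ref{XYcomm}) follows from its equal-time version; an analogous reduction via (\ref{dYj}) and the already-established $[X(\tau),Y(s)^\rT]=0$ handles (\ref{Ycomm}). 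You should insert this reduction before invoking your equal-time ODE argument.
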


As can be seen from the proof of this theorem in \ref{sec:PRproof}, the PR conditions (\ref{PR1})--(\ref{PR3}) are obtained by applying the homomorphism between the algebra of block Toeplitz matrices, the corresponding  convolution algebra of matrix-valued maps on the lattice $\mZ^\nu$ and the algebra of SFTs with the pointwise multiplication over the torus $\mT^\nu$ to the PR conditions
\begin{align}
\label{bPR1}
    A \bTheta + \bTheta A^\rT + B \bitJ_m B^\rT
   & = 0,\\
\label{bPR2}
  \bTheta C^\rT  + B\bitJ_m D^\rT
   & = 0,\\
\label{bPR3}
  D\bitJ_m D^\rT
   & = \bitJ_r
\end{align}
for the QSDEs  (\ref{dX}), (\ref{dY}). Here, the block diagonal matrices
\begin{equation}
\label{TJJ}
    \bTheta
    :=
    (\delta_{jk}\Theta)_{j,k\in \mZ^\nu},
     \qquad
     \bitJ_m:= (\delta_{jk}J_m)_{j,k\in \mZ^\nu},
      \qquad
      \bitJ_r:= (\delta_{jk}J_r)_{j,k\in \mZ^\nu}
\end{equation}
specify the CCRs for the internal network variables and the external fields, respectively:
\begin{equation*}
\label{XXcomm}
    [X,X^\rT] = 2i\bTheta,
    \qquad
    [\rd W,\rd W^\rT] = 2i\bitJ_m\rd t,
    \qquad
    [\rd Y,\rd Y^\rT] = 2i\bitJ_r\rd t.
\end{equation*}
Indeed, the matrices $\bTheta$, $\bitJ_m$, $\bitJ_r$ in (\ref{TJJ}) have constant SFTs $\Theta$, $J_m$, $J_r$, respectively, which together with (\ref{cAB*}), (\ref{cCD*}),  makes (\ref{PR1})--(\ref{PR3}) equivalent to the corresponding conditions in (\ref{bPR1})--(\ref{bPR3}).
Also note that the PR conditions (\ref{PR1})--(\ref{PR3}) in the spatial frequency domain can be represented in the form
\begin{equation}
\label{PRmat}
    \begin{bmatrix}
      \cA(\sigma) & \cB(\sigma) & I_n & 0\\
      \cC(\sigma) & \cD(\sigma) & 0 & I_r
    \end{bmatrix}
    \begin{bmatrix}
      0 & 0& \Theta  & 0\\
      0 & J_m & 0 & 0 \\
      \Theta  & 0 & 0 & 0\\
       0& 0 & 0 & -J_r
    \end{bmatrix}
    \begin{bmatrix}
      \cA(\sigma)^* & \cC(\sigma)^*\\
      \cB(\sigma)^* & \cD(\sigma)^*\\
      I_n & 0\\
      0 & I_r
    \end{bmatrix}
    =
    0,
    \qquad
    \sigma \in \mT^\nu.
\end{equation}
Similarly to OQHOs with finitely many dynamic variables\cite{SP_2012}, the PR conditions (\ref{PR1})--(\ref{PR3}) (or (\ref{PRmat})) imply a $(J,J)$-unitarity property\cite{K_1997}  for the spatio-temporal transfer function of the network from $\rd W$ in (\ref{dX}) to $\rd Y$ in (\ref{dY}) defined as
\begin{equation}
\label{trans}
    F(\sigma,s)
     = \cC(\sigma)(sI_n- \cA(\sigma))^{-1} \cB(\sigma) + \cD(\sigma),
     \qquad
     \sigma \in \mT^\nu,\
     s \in \mC\setminus \fS(\sigma),
\end{equation}
by analogy with the finite-dimensional case, where $\fS(\sigma)$ denotes the spectrum of the matrix $\cA(\sigma)$. The corresponding conjugate of the transfer function  is given by
\begin{align}
\nonumber
  F^\diam(\sigma,s)
  & :=
  F(\sigma,-\overline{s})^*\\
\nonumber
  & =
  -\cB(\sigma)^*(sI_n+ \cA(\sigma)^*)^{-1} \cC(\sigma)^* + \cD(\sigma)^*\\
\label{Fdiam}
  & =
  F(-\sigma,-s)^\rT
\end{align}
for any $s \in \mC\setminus (-\fS(-\sigma)) $ in view of the relations (\ref{cAB*}), (\ref{cCD*}) and the invariance of the spectrum of a square matrix under the transpose.

\begin{theorem}
\label{th:JJ}
Under the PR conditions (\ref{PR1})--(\ref{PR3}) on the network QSDEs (\ref{dXj}), (\ref{dYj}), the transfer function (\ref{trans}) satisfies
\begin{equation}
\label{JJ}
  F(\sigma,s)J_m F^\diam(\sigma,s) = J_r,
  \qquad
    \sigma \in \mT^\nu,\
  s \in \mC\setminus (\fS(\sigma) \bigcup (-\fS(-\sigma))).
\end{equation}
\hfill$\square$
\end{theorem}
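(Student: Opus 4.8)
The plan is to derive the $(J,J)$-unitarity identity (\ref{JJ}) directly from the algebraic PR conditions (\ref{PR1})--(\ref{PR3}), treating the spatial frequency $\sigma\in\mT^\nu$ as a fixed parameter so that the argument reduces to the familiar finite-dimensional computation\cite{SP_2012}. Write $a:=\cA(\sigma)$, $b:=\cB(\sigma)$, $c:=\cC(\sigma)$, $d:=\cD(\sigma)$, and recall from (\ref{Fdiam}) that $F^\diam(\sigma,s) = -b^*(sI_n + a^*)^{-1}c^* + d^*$ on the stated resolvent set. First I would expand the product $F(\sigma,s)J_m F^\diam(\sigma,s)$ into four terms: the constant term $d J_m d^*$, the two "cross" terms $c(sI_n-a)^{-1}b J_m d^*$ and $-d J_m b^*(sI_n+a^*)^{-1}c^*$, and the "quadratic" term $-c(sI_n-a)^{-1}b J_m b^*(sI_n+a^*)^{-1}c^*$. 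By (\ref{PR3}) the constant term is exactly $J_r$, so it remains to show the other three terms cancel.

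Next I would use (\ref{PR2}) in the form $b J_m d^* = -\Theta c^*$ to rewrite the first cross term as $-c(sI_n-a)^{-1}\Theta c^*$, and its adjoint-type relation $d J_m b^* = -c\Theta$ (obtained by transposing/conjugating (\ref{PR2}), using antisymmetry of $\Theta$ and $J_m$) to rewrite the second cross term as $c\Theta(sI_n+a^*)^{-1}c^*$. For the quadratic term I would invoke (\ref{PR1}) in the form $b J_m b^* = -(a\Theta + \Theta a^*)$ and then use the standard resolvent algebraic identity
\begin{equation*}
  (sI_n-a)^{-1}(a\Theta + \Theta a^*)(sI_n+a^*)^{-1}
  =
  \Theta(sI_n+a^*)^{-1} - (sI_n-a)^{-1}\Theta,
\end{equation*}
which follows by writing $a = -(sI_n-a) + sI_n$ and $a^* = (sI_n+a^*) - sI_n$ and noting the two $sI_n$-contributions cancel. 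Substituting this turns the quadratic term into $-c\Theta(sI_n+a^*)^{-1}c^* + c(sI_n-a)^{-1}\Theta c^*$, which is precisely the negative of the sum of the two rewritten cross terms. Hence all three non-constant terms cancel and $F(\sigma,s)J_m F^\diam(\sigma,s) = J_r$, as claimed.

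The argument is essentially routine linear algebra once $\sigma$ is fixed; the only points requiring care are the bookkeeping with adjoints versus transposes when extracting the "conjugate" form of (\ref{PR2}) — here one must use that $\Theta^\rT=-\Theta$, $J_m^\rT=-J_m$ together with the Hermitian-symmetry relations (\ref{cAB*}), (\ref{cCD*}) to pass between $\cC(\sigma)$ and $\cC(-\sigma)^\rT$ consistently — and ensuring the resolvent identity is applied on the correct domain, namely $s\notin\fS(\sigma)\cup(-\fS(-\sigma))$, so that both $(sI_n-a)^{-1}$ and $(sI_n+a^*)^{-1}$ exist; note $\fS(a^*)=\overline{\fS(a)}$ and, by (\ref{cAB*}), $\fS(\cA(\sigma)^*)=\fS(\cA(-\sigma))$, so $-\fS(-\sigma)$ is exactly the set of $s$ with $sI_n+a^*$ singular. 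I expect the main (though modest) obstacle to be organizing this adjoint/transpose bookkeeping cleanly rather than any genuine analytic difficulty, since the block-Toeplitz-to-SFT homomorphism has already reduced everything to a pointwise finite-dimensional statement.
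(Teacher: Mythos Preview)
Your argument is correct. There is a harmless sign slip in the displayed resolvent identity --- the correct formula is
\[
  (sI_n-a)^{-1}(a\Theta + \Theta a^*)(sI_n+a^*)^{-1}
  =
  (sI_n-a)^{-1}\Theta - \Theta(sI_n+a^*)^{-1},
\]
--- but you evidently use the correct sign when substituting in the very next sentence, so the cancellation goes through as stated.

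Your route differs from the paper's. You expand $F J_m F^\diam$ into four terms and invoke (\ref{PR1}), (\ref{PR2}), (\ref{PR3}) separately, together with the resolvent identity, to cancel the three non-constant pieces. The paper instead packages all three PR conditions into the single block-matrix relation (\ref{PRmat}), introduces the auxiliary transfer function $\cT(\sigma,s) = \cC(\sigma)(sI_n - \cA(\sigma))^{-1}$ and its conjugate $\cT^\diam$, and derives (\ref{JJ}) in one stroke by sandwiching (\ref{PRmat}) between the augmented row $\begin{bmatrix}\cT & I_r\end{bmatrix}$ and column $\begin{bmatrix}\cT^\diam \\ I_r\end{bmatrix}$. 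The paper's approach is more structural --- it exhibits (\ref{JJ}) as a direct compression of (\ref{PRmat}) and handles all three PR conditions simultaneously --- whereas your direct expansion is more elementary and self-contained, requiring no auxiliary objects. Both are standard finite-dimensional arguments applied pointwise in $\sigma$, and your remarks on the domain $s\notin\fS(\sigma)\cup(-\fS(-\sigma))$ via (\ref{cAB*}) are spot on.
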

\begin{proof}
We will use an auxiliary spatio-temporal transfer function $\cT$ (from $B \rd W$ in (\ref{dX}) to the drift $CX$ of $\rd Y$ in (\ref{dY}))  and its conjugate $\cT^\diam$ given by
\begin{equation}
\label{cZ}
    \cT(\sigma,s)
     :=
    \cC(\sigma)(sI_n - \cA(\sigma))^{-1},
    \qquad
    \cT^\diam(\sigma,s)
     :=
    -(sI_n + \cA(\sigma)^*)^{-1} \cC(\sigma)^*.
\end{equation}
A combination of (\ref{trans}), (\ref{cZ}) leads to the identity
\begin{align}
\label{sZ_ZF}
    \begin{bmatrix}
        \cT(\sigma,s)  & I_r
    \end{bmatrix}
    \begin{bmatrix}
      \cA(\sigma) & \cB(\sigma)\\
      \cC(\sigma) & \cD(\sigma)
    \end{bmatrix}
    & =
    \begin{bmatrix}
      s\cT(\sigma,s) & \cF(\sigma,s)
    \end{bmatrix}
\end{align}
and its conjugate counterpart
\begin{align}
\label{sZ1_ZF1}
    \begin{bmatrix}
      \cA(\sigma)^* &      \cC(\sigma)^*\\
      \cB(\sigma)^* &      \cD(\sigma)^*
    \end{bmatrix}
    \begin{bmatrix}
        \cT^\diam(\sigma,s)  \\
         I_r
    \end{bmatrix}
    & =
    \begin{bmatrix}
      -s\cT^\diam(\sigma,s)\\
      \cF^\diam(\sigma,s)
    \end{bmatrix}.
\end{align}
Since the fulfillment of (\ref{PR1})--(\ref{PR3}) is equivalent to (\ref{PRmat}), then by left and right multiplying both sides of (\ref{PRmat})
by
$
    {\begin{bmatrix}
        \cT(\sigma,s)  & I_r
    \end{bmatrix}}
$,
$
    {\begin{bmatrix}
        \cT^\diam(\sigma,s)\\
        I_r
    \end{bmatrix}}
$
and using (\ref{sZ_ZF}), (\ref{sZ1_ZF1}), it follows that
\begin{align}
\nonumber
    0 & =
        \begin{bmatrix}
      s\cT(\sigma,s) & \cF(\sigma,s) & \cT(\sigma,s) & I_r
    \end{bmatrix}
    \begin{bmatrix}
      0 & 0& \Theta  & 0\\
      0 & J_m & 0 & 0 \\
      \Theta  & 0 & 0 & 0\\
       0& 0 & 0 & -J_r
    \end{bmatrix}
    \begin{bmatrix}
      -s\cT^\diam(\sigma,s) \\
       \cF^\diam(\sigma,s) \\
        \cT^\diam(\sigma,s) \\
         I_r
    \end{bmatrix}\\
\nonumber
    & =
    \cF(\sigma,s)J_m\cF^\diam(\sigma,s) - J_r
\end{align}
for $(\sigma,s)$ belonging to the intersection of domains of the functions $\cF$,  $\cF^\diam$ in (\ref{trans}), (\ref{Fdiam}),
which establishes (\ref{JJ}).
\end{proof}

The validity of (\ref{JJ}), as a corollary of the PR conditions, does not employ a particular form of the CCR matrix $\Theta$ of the internal variables and is a property   of the network as an input-output operator. Also note that (\ref{PRmat}), (\ref{JJ}) are organised as indefinite quadratic constraints on the quadruple $(\cA, \cB, \cC, \cD)$ of the SFTs and the transfer function $\cF$.

\section{Energy and Coupling Matrices, and Network Interconnections}
\label{sec:par}

The fulfillment of the PR conditions (\ref{PR1}), (\ref{PR2}) is secured by the parameterisation of the coefficients (\ref{ABCD}) of the QSDEs (\ref{dXj}), (\ref{dYj}) in terms of energy and coupling matrices $R:=(R_{j-k})_{j,k\in \mZ^\nu} = R^\rT \in \fT_{n,n}$ and $M:=(M_{j-k})_{j,k\in \mZ^\nu} \in \fT_{m,n}$ specifying the network Hamiltonian and  the operators of coupling of the component  systems to the input fields. More precisely, in accordance with (\ref{PR1}), (\ref{PR2}),
\begin{align}
\label{cA}
    \cA(\sigma)
    & = 2\Theta (\cR(\sigma) + \cM(\sigma)^* J_m \cM(\sigma)),\\
\label{cB}
    \cB(\sigma)
    & = 2\Theta \cM(\sigma)^* ,\\
\label{cC}
    \cC(\sigma)
    & = 2\cD(\sigma)J_m \cM(\sigma) ,
    \qquad
    \sigma \in \mT^\nu,
\end{align}
where $\cR$, $\cM$ are the SFTs associated with  $R$, $M$, respectively. The blocks $R_\ell = R_{-\ell}^\rT \in \mR^{n\x n}$ of the energy matrix  $R$
parameterise the Hamiltonian
\begin{align}
\nonumber
    H_G
    & :=
    \frac{1}{2}
    X_G^\rT
    R_G
    X_G\\
\nonumber
    & = \frac{1}{2}
    \sum_{j,k\in G}
    X_j^\rT R_{j-k} X_k\\
\label{HG}
    & =
    \frac{1}{2}
    \sum_{j\in G}
    X_j^\rT R_0 X_j
    +
    \frac{1}{2}
    \sum_{j,k \in G, j\ne k}
    X_j^\rT R_{j-k} X_k
\end{align}
for the fragment of the network on a nonempty finite subset $G \subset \mZ^\nu$ of the lattice consisting of $\#G <+\infty$ sites, where the relevant network variables are assembled into the vector
\begin{equation}
\label{XG}
  X_G : = (X_k)_{k \in G},
\end{equation}
and use is made of the matrix $R_G:= (R_{j-k})_{j,k\in G} = R_G^\rT \in \mR^{G\x G}$.
In the Hamiltonian (\ref{HG}), the matrix  $R_0$ specifies the self-energy of the component systems, while $R_{j-k}$ parameterises the direct (energy) coupling of the $j$th and $k$th systems, with  $j\ne k$. For any $j,k \in \mZ^\nu$, the matrix $M_{j-k}$ specifies the vector $M_{j-k}X_k$ of operators of coupling of the $k$th component system to the input field $W_j$. Therefore, (\ref{cA})--(\ref{cC}) are equivalent to
\begin{align}
\label{Aell}
    A_\ell
    & =
    2\Theta
    \Big(
        R_\ell+
        \sum_{c\in \mZ^\nu}
        M_c^\rT
        J_m
        M_{\ell+c}
    \Big),\\
\label{Bell}
    B_\ell
    & = 2\Theta M_{-\ell}^\rT,\\
\label{Cell}
    C_\ell
    & =
    2
    \sum_{c\in \mZ^\nu}
    D_{\ell-c}J_m M_c,
    \qquad
    \ell \in \mZ^\nu.
\end{align}
In the case of \emph{finite range interaction} (between the component systems in the network and with the external fields), the matrices $R_\ell$, $M_\ell$, $D_\ell$  vanish for all sufficiently large $\ell \in \mZ^\nu$, and hence, so also do the matrices $A_\ell$, $B_\ell$, $C_\ell$ in (\ref{Aell})--(\ref{Cell}). In particular, a network with nearest neighbour coupling between the subsystems, with each of them being affected by the local input field, is illustrated in Fig.~\ref{fig:net}.
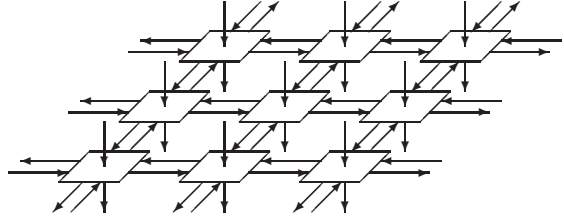
\begin{figure}[h!]
\centering
\unitlength=0.8mm
\linethickness{0.4pt}
\begin{picture}(100.00,31.00)

\multiput(15,2)(10,10){3}{
\multiput(0,0)(20,0){3}
{
\put(0, 0){\line(1,0){10}}
\put(10, 0){\line(1,1){5}}
\put(5, 5){\line(1,0){10}}
\put(0, 0){\line(1,1){5}}

\put(7.5, 10){\vector(0,-1){7.5}}
\put(7.5, 0){\vector(0,-1){5}}

\put(-8.5, 1.5){\vector(1,0){10}}
\put(3.5, 3.5){\vector(-1,0){10}}

\put(11.5, 5){\vector(1,1){5}}
\put(13.5, 10){\vector(-1,-1){5}}
}
\put(51.5, 1.5){\vector(1,0){10}}
\put(63.5, 3.5){\vector(-1,0){10}}
}

\multiput(15,2)(20,0){3}{
\put(2, -5){\vector(1,1){5}}
\put(4, 0){\vector(-1,-1){5}}}
\end{picture}\vskip3mm
\caption{An illustration of a $(3\x 3)$-fragment of a two-dimensional ($\nu = 2$) quantum network, where each component system is coupled to its nearest neighbours and a local input field (the external quantum fields are represented by vertical arrows).}
\label{fig:net}
\end{figure}

We will now outline the computation of energy and coupling matrices for network interconnections. Consider two translation invariant quantum networks on the common lattice $\mZ^\nu$ with  quadruples $(A^{[k]}, B^{[k]}, C^{[k]}, D^{[k]})$ of block Toeplitz matrices and input, internal and output dimensions  $m_k$, $n_k$, $r_k$, respectively, $k=1,2$.    The corresponding augmented vectors of input, internal and output fields are denoted by $W^{[k]}$, $X^{[k]}$, $Y^{[k]}$, and the CCR matrices of the internal variables of the networks in the sense of (\ref{Xcomm}) are denoted by $\Theta_k$.  The spatio-temporal transfer functions of the networks are
\begin{equation}
\label{Fk}
    F_k(\sigma,s):= \cC_k(\sigma)(sI_{n_k}- \cA_k(\sigma))^{-1} \cB_k(\sigma) + \cD_k(\sigma)
\end{equation}
with values in $\mC^{r_k\x m_k}$ for $s\in \mC$ with $\Re s$ sufficiently large, and $\sigma \in \mT^\nu$. If $r_1=m_2$, and the output fields of the first network are fed as the input fields to the second network (see Fig.~\ref{fig:FF}),
\begin{figure}[htbp]
\unitlength=1mm
\linethickness{0.4pt}
\centering
\begin{picture}(110,11.00)
    \put(35,0){\framebox(10,10)[cc]{{$F_2$}}}
    \put(65,0){\framebox(10,10)[cc]{{$F_1$}}}
    \put(35,5){\vector(-1,0){20}}
    \put(65,5){\vector(-1,0){20}}
    \put(95,5){\vector(-1,0){20}}
    \put(10,5){\makebox(0,0)[cc]{$Y^{[2]}$}}
    \put(100,5){\makebox(0,0)[cc]{$W^{[1]}$}}
    \put(55,10){\makebox(0,0)[cc]{$W^{[2]}=Y^{[1]}$}}
\end{picture}
\caption{The concatenation of translation invariant quantum networks on a common lattice with spatio-temporal transfer functions $F_1$, $F_2$.}
\label{fig:FF}
\end{figure}
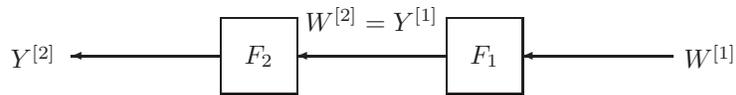
the resulting composition is a translation invariant quantum network with input, internal and output dimensions $m_1$, $n:= n_1+n_2$, $r_2$, respectively, and the spatio-temporal transfer function
$$
    F(\sigma,s)
    =
    F_2(\sigma,s)F_1(\sigma,s)
    =
    \cC(\sigma)(sI_n- \cA(\sigma))^{-1} \cB(\sigma) + \cD(\sigma),
$$
which is the pointwise product of the transfer functions (\ref{Fk}). Here, as in the case of cascaded classical linear time invariant systems,
\begin{equation}
\label{ser}
    {\begin{bmatrix}
      \cA(\sigma) & \cB(\sigma)\\
      \cC(\sigma) & \cD(\sigma)
    \end{bmatrix}}
    =
    \left[
    {\begin{array}{cc|c}
      \cA_1(\sigma) & 0& \cB_1(\sigma)\\
      \cB_2(\sigma)\cC_1(\sigma) & \cA_2(\sigma) & \cB_2(\sigma)\cD_1(\sigma) \\
      \hline
      \cD_2(\sigma)\cC_1(\sigma) & \cC_2(\sigma) & \cD_2(\sigma)\cD_1(\sigma)
    \end{array}}
    \right],
    \qquad
    \sigma \in\mT^\nu.
\end{equation}
The concatenated network  has the CCR matrix
$$
    \Theta
    =
    \begin{bmatrix}
      \Theta_1 & 0\\
      0 & \Theta_2
    \end{bmatrix}
$$
for its internal variables in the sense of (\ref{Xcomm})
and the energy and coupling matrices which, in view of (\ref{cA})--(\ref{cC}) and (\ref{ser}),  can be recovered from the SFTs
\begin{align*}
    \cR(\sigma)
    & =
    \begin{bmatrix}
      \cR_1(\sigma) & -\cM_1(\sigma)^* J_{m_1} \cD_1(\sigma)^* \cM_2(\sigma)\\
      \cM_2(\sigma)^*\cD_1(\sigma) J_{m_1} \cM_1(\sigma) & \cR_2(\sigma)
    \end{bmatrix},\\
    \cM(\sigma)
    & =
    \begin{bmatrix}
      \cM_1(\sigma) & \cD_1(\sigma)^* \cM_2(\sigma)
    \end{bmatrix},
    \qquad
    \sigma \in \mT^\nu,
\end{align*}
which are expressed in terms of the SFTs $\cR_k$, $\cM_k$ of the energy and coupling matrices of the networks,  $k=1,2$.
Other algebraic operations for translation invariant networks on $\mZ^\nu$ are carried out in a similar pointwise fashion over the torus $\mT^\nu$. For example, feedback interconnections of   such networks involve linear fractional transformations of spatio-temporal transfer functions. In particular, Fig.~\ref{fig:loop}
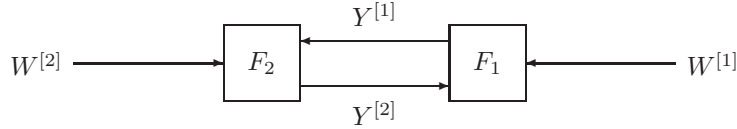
\begin{figure}[htbp]
\unitlength=1mm
\linethickness{0.4pt}
\centering
\begin{picture}(110,11.00)
    \put(35,0){\framebox(10,10)[cc]{{$F_2$}}}
    \put(65,0){\framebox(10,10)[cc]{{$F_1$}}}
    \put(15,5){\vector(1,0){20}}
    \put(65,8){\vector(-1,0){20}}
    \put(45,2){\vector(1,0){20}}
    \put(95,5){\vector(-1,0){20}}
    \put(10,5){\makebox(0,0)[cc]{$W^{[2]}$}}
    \put(100,5){\makebox(0,0)[cc]{$W^{[1]}$}}
    \put(55,10){\makebox(0,0)[cb]{$Y^{[1]}$}}
    \put(55,0){\makebox(0,0)[ct]{$Y^{[2]}$}}
\end{picture}\vskip2mm
\caption{A field-mediated feedback interconnection of translation invariant quantum networks on a common lattice with external quantum fields $W^{[1]}$, $W^{[2]}$.}
\label{fig:loop}
\end{figure}
illustrates a quantum feedback network, resulting from a field-mediated connection of a translation invariant network $F_1$, interpreted as a plant,  with another such network $F_2$ (on the same carrier lattice $\mZ^\nu$), playing the role of a controller. This gives rise to coherent (measurement-free) quantum control settings\cite{NJP_2009,SVP_2015}, where the energy parameters  of the controller and its coupling with the plant can be varied so as to satisfy performance specifications for the closed-loop network such as stability and minimization of cost functionals in the steady-state regime.

\section{Invariant Gaussian State in the Case of Vacuum Input Fields}
\label{sec:inv}

We will be concerned with the case of statistically independent input fields in the vacuum state, defined in terms of the quasi-characteristic functional (QCF)\cite{CH_1971,HP_1984,P_1992} of the incremented quantum Wiener processes as
\begin{align}
\nonumber
    \bE
    \re^{i\int_0^T u(t)^\rT \rd W(t)}
    & =
    \bE
    \prod_{k \in \mZ^\nu}
    \re^{i\int_0^T u_k(t)^\rT \rd W_k(t)}    \\
\label{vac}
    & =
    \prod_{k \in \mZ^\nu}
    \bE
    \re^{i\int_0^T u_k(t)^\rT \rd W_k(t)}
    =
    \re^{-\frac{1}{2} \int_0^T |u(t)|^2\rd t}
\end{align}
for any time horizon $T>0$ and any square integrable map $u:=(u_k)_{k \in \mZ^\nu}: [0,T]\to \ell^2(\mZ^\nu, \mR^m)$, where the standard Euclidean norm $|\cdot|$ is extended to $\ell^2(\mZ^\nu, \mR^m)$ as $|u|:= \sqrt{\sum_{k \in \mZ^\nu}|u_k|^2}$ along with the inner product $u^\rT w$.  Here, $\bE \zeta := \Tr(\rho \zeta)$ is the quantum expectation over the density operator
\begin{equation}
\label{rho}
    \rho:= \rho_0\ox \ups,
\end{equation}
where $\rho_0$ is the initial network state on $\ox_{k\in \mZ^\nu}\fH_k$, and  $\ups:= \ox_{k\in \mZ^\nu}\ups_k$ is the vacuum state on the Fock space $\fF$, with $\ups_k$ the vacuum states on the corresponding Fock spaces $\fF_k$. The averaging in (\ref{vac}) reduces to that over $\ups$, and the factorizations come from the tensor-product structure of $\fF$, $\ups$ and the commutativity between the quantum Wiener processes $W_k$ on the spaces $\fF_k$ with different $k\in \mZ^\nu$. The state $\rho_0$ in (\ref{rho}) is said to be \emph{proper} if the initial network variables have finite second moments,
and the matrix
\begin{equation}
\label{K}
    K:= (K_{jk})_{j,k\in \mZ^\nu}:= \Re \bE(X(0)X(0)^\rT),
    \qquad
    K_{jk}:= \Re \bE (X_j(0)X_k(0)^\rT),
\end{equation}
acting on $u:=(u_k)_{k \in \mZ^\nu} \in \ell^2(\mZ^\nu, \mR^n)$ as
$    K u
    :=
    \big(
    \sum_{k \in \mZ^\nu}
    K_{jk}u_k
    \big)_{j \in \mZ^\nu}
$,
specifies a bounded operator in the sense of the $\ell^2$-induced norm:
\begin{equation}
\label{Knorm}
    \|K\| < +\infty.
\end{equation}

\begin{theorem}
\label{th:inv}
 Suppose the translation invariant network, described together with related quantities by (\ref{dXj})--(\ref{cABCD}), satisfies the stability condition
\begin{equation}
\label{stab}
    \max_{\sigma \in \mT^\nu}
    \br(\re^{\cA(\sigma)})
    < 1
\end{equation}
(with $\br(\cdot)$ the spectral radius of a matrix),
has a proper  initial state in the sense of (\ref{K}), (\ref{Knorm})
and is driven by the vacuum input fields as specified by (\ref{vac}).
 Then there is weak convergence to a unique invariant Gaussian quantum state for the internal network variables with zero mean and block Toeplitz quantum covariances
\begin{equation}
\label{EXX}
    \bE(X_j(t)X_k(t)^\rT) = P_{j-k} + i\delta_{jk} \Theta,
    \qquad
    j,k\in \mZ^\nu.
\end{equation}
The SFT
\begin{equation}
\label{cP}
      \cP(\sigma)
    :=
    \sum_{\ell\in \mZ^\nu}
    \re^{-i\ell^\rT \sigma}
      P_\ell,
    \qquad
    \sigma \in \mT^\nu,
\end{equation}
for the real parts  $P_\ell = P_{-\ell}^\rT \in \mR^{n\x n}$ of (\ref{EXX})
is found uniquely from the algebraic Lyapunov equation (ALE)
\begin{equation}
\label{cPALE}
  \cA(\sigma) \cP(\sigma)+ \cP(\sigma)\cA(\sigma)^* + \cB(\sigma) \cB(\sigma)^*
   = 0.
\end{equation}
\hfill$\square$
\end{theorem}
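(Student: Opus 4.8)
The plan is to reduce the infinite-dimensional block Toeplitz problem to a fibered family of finite-dimensional problems over the torus $\mT^\nu$ via the spatial Fourier transform, and then invoke the standard theory for open quantum harmonic oscillators with finitely many degrees of freedom on each fiber. First I would apply the homomorphism between the block Toeplitz algebra, the convolution algebra on $\mZ^\nu$, and the algebra of SFTs with pointwise multiplication (the machinery already used for Theorem~\ref{th:PR}) to rewrite the network QSDE (\ref{dX}) in the spatial frequency domain: the Heisenberg evolution of the SFT-transformed variables $\widehat{X}(\sigma,t)$ is governed, for each fixed $\sigma \in \mT^\nu$, by a finite-dimensional linear QSDE with drift matrix $\cA(\sigma)$ and dispersion matrix $\cB(\sigma)$. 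The stability condition (\ref{stab}) guarantees that $\br(\re^{\cA(\sigma)})<1$ uniformly in $\sigma$, equivalently (by compactness of $\mT^\nu$ and continuity of $\cA$) that $\cA(\sigma)$ is Hurwitz with a uniform spectral gap; this is what drives the convergence of the transient terms.

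The key steps, in order, are: (i) write the mild solution of (\ref{dX}), $X(t) = \re^{At}X(0) + \int_0^t \re^{A(t-s)}B\,\rd W(s)$, and compute $\Re\bE(X(t)X(t)^\rT)$ using the vacuum Ito table $\rd W\,\rd W^\rT = (I_{\mZ^\nu}\ox(I_m + iJ_m))\,\rd t$ (so that $\Re(\rd W\,\rd W^\rT) = I\,\rd t$), giving $\Re\bE(X(t)X(t)^\rT) = \re^{At}K\re^{A^\rT t} + \int_0^t \re^{As}BB^\rT\re^{A^\rT s}\,\rd s$; (ii) pass this identity through the SFT, using that the SFT of $\re^{At}$ is $\re^{\cA(\sigma)t}$ and of $BB^\rT$ is $\cB(\sigma)\cB(\sigma)^*$ (here one uses the Hermitian symmetry (\ref{cAB*}) to turn $B^\rT$ into $\cB(-\sigma)^\rT = \cB(\sigma)^*$), yielding $\widehat{\Re\bE(XX^\rT)}(\sigma,t) = \re^{\cA(\sigma)t}\cK(\sigma)\re^{\cA(\sigma)^*t} + \int_0^t \re^{\cA(\sigma)s}\cB(\sigma)\cB(\sigma)^*\re^{\cA(\sigma)^*s}\,\rd s$, where $\cK$ is the SFT of $K$, well-defined and bounded by (\ref{Knorm}); (iii) let $t\to\infty$: the first term vanishes because $\cA(\sigma)$ is Hurwitz, and the integral converges to $\cP(\sigma) := \int_0^\infty \re^{\cA(\sigma)s}\cB(\sigma)\cB(\sigma)^*\re^{\cA(\sigma)^*s}\,\rd s$, which is precisely the unique solution of the ALE (\ref{cPALE}) since $\cA(\sigma)$ has no spectrum on the imaginary axis; (iv) identify the imaginary part: the CCR-preservation content of Theorem~\ref{th:PR} (specifically (\ref{PR1}), equivalently (\ref{bPR1})) shows $\Im\bE(X_j(t)X_k(t)^\rT) = \delta_{jk}\Theta$ is conserved for all $t$, giving the off-diagonal structure in (\ref{EXX}); (v) assemble: the limiting covariance kernel $P_\ell + i\delta_{\ell 0}\Theta$ together with zero mean and the Gaussian-preserving property of linear QSDEs driven by Gaussian (vacuum) noise identifies the weak limit as the claimed invariant Gaussian state, and weak convergence of the characteristic functionals follows from convergence of the first two moments for Gaussian states.

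The main obstacle, I expect, is not any single step but the \emph{interchange of the SFT (an infinite sum / integral over $\mZ^\nu$) with the $t\to\infty$ limit and with the quantum expectation}, i.e.\ making rigorous that convergence on each fiber $\sigma$ assembles into convergence of the block Toeplitz operator $\Re\bE(X(t)X(t)^\rT)$ in the appropriate ($\ell^2$-induced, or weak) operator topology, and that the limiting kernel is genuinely summable so that $\cP$ in (\ref{cP}) is well-defined. This requires the uniform spectral gap from (\ref{stab}) to produce a bound of the form $\|\re^{\cA(\sigma)t}\| \le c\,\re^{-\beta t}$ with $c,\beta>0$ independent of $\sigma$ (standard from compactness plus continuity of the resolvent), together with the uniform bounds $\sup_\sigma\|\cK(\sigma)\|<\infty$, $\sup_\sigma\|\cB(\sigma)\|<\infty$ inherited from (\ref{Knorm}), (\ref{ABCDsum}); dominated convergence then closes the argument. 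A secondary point worth a careful line is the reduction (via the continuous tensor-product filtration and the commutation relations (\ref{WXYdWcomm})) that allows the Ito-calculus computation of the second moments to be carried out exactly as in the finite-dimensional OQHO case, fiber by fiber.
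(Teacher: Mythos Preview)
Your second-moment computation is essentially right and matches the paper's treatment of the Gaussian part $E(t)\to P$ via the fiberwise ALE. But there is a genuine gap in steps (ii) and (v), both stemming from the same oversight: the theorem does \emph{not} assume the initial network state $\rho_0$ is Gaussian, nor that the initial covariance $K$ in (\ref{K}) is block Toeplitz. The ``proper'' hypothesis (\ref{Knorm}) only says $K$ is a bounded operator on $\ell^2(\mZ^\nu,\mR^n)$. Consequently, in (ii) there is no SFT $\cK(\sigma)$ to speak of, and in (v) the ``Gaussian-preserving property of linear QSDEs'' does not apply---the state at finite $t$ need not be Gaussian, so convergence of first and second moments does not by itself yield weak convergence of states.

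The paper closes this gap by working directly with the quasi-characteristic functional $\phi(t,u)=\bE\re^{iu^\rT X(t)}$. The tensor-product structure $\rho=\rho_0\ox\ups$ and commutativity of $X(0)$ with the Fock-space stochastic integral give the exact factorization
\[
\phi(t,u)=\phi(0,\re^{tA^\rT}u)\,\re^{-\frac{1}{2}\|u\|_{E(t)}^2},
\]
separating the possibly non-Gaussian, non-Toeplitz initial contribution from the Gaussian Toeplitz part. For the first factor one uses only properness, via the elementary bound $|\phi(0,v)-1|\le\sqrt{\|K\|}\,|v|$, together with $|\re^{tA^\rT}u|\to 0$ (which the paper proves by a dominated-convergence argument on the torus using an auxiliary Lyapunov matrix $\Gamma(\sigma)$). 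The second factor converges since $E(t)\to P$ monotonically, exactly as you argue. This yields pointwise convergence of QCFs to the Gaussian limit without any Gaussianity or Toeplitz assumption on $\rho_0$ or $K$. Your framework recovers the ALE characterization of $\cP$ correctly; to complete the proof you need to replace (v) with this QCF argument.
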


The matrix $P$, obtained in (\ref{Elim}) of the proof of the above theorem in \ref{sec:invproof}, can be shown to belong to $\fT_{n,n}$ if the SFTs $\cA$, $\cB$ have an appropriate degree of smoothness.

\begin{lemma}
\label{lem:smooth}
In addition to the assumptions of Theorem~\ref{th:inv}, suppose the SFTs $\cA$, $\cB$ in (\ref{cABCD}) are $r$ times continuously differentiable, with
\begin{equation}
\label{rnu}
  r> \nu.
\end{equation}
Then the block Toeplitz matrix $P$ in (\ref{Elim}) of the invariant real covariances in (\ref{EXX}) belongs to the Banach algebra $\fT_{n,n}$.
\hfill$\square$
\end{lemma}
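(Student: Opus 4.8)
The plan is to show that under the smoothness hypothesis \eqref{rnu}, the Fourier coefficients $P_\ell$ of the solution $\cP$ to the ALE \eqref{cPALE} are absolutely summable over $\mZ^\nu$, which is precisely the statement that $P \in \fT_{n,n}$ (cf.\ \ref{sec:Toeplitz} and condition \eqref{ABCDsum} for the coefficient matrices). The standard route is: (i) establish that $\cP$ inherits $r$-fold continuous differentiability from $\cA$ and $\cB$; (ii) invoke the classical fact that a $C^r$ function on the torus $\mT^\nu$ with $r>\nu$ has absolutely summable Fourier coefficients. Step (ii) is the multivariate version of the elementary one-dimensional estimate: integrating by parts $r$ times gives $\|P_\ell\| \lesssim |\ell|^{-r}$ componentwise in the partial derivatives, and summing a bound of the form $\sum_{\ell}(1+|\ell|)^{-r} < +\infty$ requires exactly $r>\nu$. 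This is a known lemma (a quantitative form of the Sobolev embedding / Bernstein-type estimate on $\mT^\nu$), so the work is concentrated in (i).

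For step (i), I would argue that $\cP(\sigma)$ depends smoothly on $\sigma$ because it is obtained from $\cA(\sigma), \cB(\sigma)$ by solving the parametrized Lyapunov equation. Under the stability condition \eqref{stab}, for each $\sigma$ the map $\cP \mapsto \cA(\sigma)\cP + \cP\cA(\sigma)^*$ is an invertible linear operator on $n\times n$ matrices (its spectrum consists of sums $\lambda + \overline{\mu}$ of eigenvalues of $\cA(\sigma)$, all with negative real part since $\br(\re^{\cA(\sigma)})<1$ is equivalent to $\cA(\sigma)$ being Hurwitz — note $\re^{\cA(\sigma)}$ is the time-one flow, so its spectral radius being $<1$ forces $\Re\lambda < 0$ for every eigenvalue $\lambda$). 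Hence $\cP(\sigma) = \fL_\sigma^{-1}(-\cB(\sigma)\cB(\sigma)^*)$, where $\fL_\sigma$ is the Lyapunov operator; equivalently one has the integral representation
\begin{equation*}
  \cP(\sigma) = \int_0^\infty \re^{t\cA(\sigma)}\cB(\sigma)\cB(\sigma)^* \re^{t\cA(\sigma)^*}\,\rd t .
\end{equation*}
Since matrix inversion (or the map $\sigma \mapsto \re^{t\cA(\sigma)}$ together with dominated-convergence differentiation under the integral sign) is $C^\infty$ on its domain and $\cB\cB^*$ is $C^r$, the composition $\sigma\mapsto\cP(\sigma)$ is $C^r$ on $\mT^\nu$; one also needs a locally uniform-in-$\sigma$ bound on $\|\re^{t\cA(\sigma)}\|$ decaying exponentially in $t$, which follows from compactness of $\mT^\nu$ and continuity of the spectral abscissa of $\cA(\sigma)$.

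The main obstacle I anticipate is the uniformity in step (i): the integral representation converges and is differentiable $r$ times only if the exponential decay rate $\sup_{\sigma}\max_{\lambda\in\fS(\sigma)}\Re\lambda < 0$ is bounded away from zero and the transient constants are bounded, uniformly over $\sigma\in\mT^\nu$; one must also control the growth of the derivatives $\d^\alpha(\re^{t\cA(\sigma)})$ in $t$ (they pick up polynomial factors $t^{|\alpha|}$), so that the $r$-fold differentiated integrand remains dominated by an integrable function of $t$, again uniformly in $\sigma$. All of this is a routine consequence of compactness of $\mT^\nu$ and continuity of the data, but it is the step that actually uses the full strength of the hypotheses; once $\cP\in C^r(\mT^\nu)$ with $r>\nu$ is in hand, the summability $\sum_\ell \|P_\ell\| < +\infty$, hence $P\in\fT_{n,n}$, is immediate from the Fourier-coefficient decay estimate.
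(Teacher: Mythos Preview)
Your proposal is correct and follows the same two-step outline as the paper: first show $\cP\in C^r(\mT^\nu)$, then deduce $\sum_\ell\|P_\ell\|<+\infty$ from $r>\nu$. The execution differs in two minor, cosmetic respects. For step~(i), the paper simply cites the vectorized formula (\ref{vec}), which expresses $\col(\cP(\sigma))$ as a smooth (matrix-inverse and product) function of $\cA(\sigma),\cB(\sigma)$ and thereby delivers $C^r$ regularity immediately, without the uniform-in-$\sigma$ exponential-decay estimates you work through for the integral representation; your route is correct but longer. For step~(ii), instead of integration by parts the paper applies the Plancherel identity to the $r$th partial derivatives $\d_{\sigma_k}^r\cP$ to obtain $\sum_\ell |\ell|^{2r}\|P_\ell\|_\rF^2<+\infty$, infers $\|P_\ell\|_\rF=o(|\ell|^{-r})$ from the fact that the terms of a convergent series tend to zero, and then sums using $r>\nu$. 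Your integration-by-parts bound $\|P_\ell\|\lesssim|\ell|^{-r}$ is the more direct and equally standard variant of the same estimate.
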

\begin{proof}
Due to (\ref{vec}), the SFT $\cP$ inherits the $r$ times continuous differentiability from $\cA$, $\cB$.   This implies that the partial derivatives  $\d_{\sigma_k}^r\cP(\sigma)$ with respect to the coordinates of $\sigma:= (\sigma_k)_{1\< k \< \nu}\in \mT^\nu$ are continuous and hence, square integrable over the torus  $\mT^\nu$. Therefore, application of the Plancherel identity yields
\begin{align}
\nonumber
    +\infty
    & >
    \frac{1}{(2\pi)^\nu}
    \int_{\mT^\nu}
    \sum_{k = 1}^\nu
    \|\d_{\sigma_k}^r\cP(\sigma)\|_\rF^2
    \rd \sigma\\
\label{Hold}
    & =
    \sum_{\ell\in \mZ^\nu}
    \|\ell\|_{2r}^{2r}
    \|P_\ell\|_\rF^2
    \>
    \nu^{1-r}
    \sum_{\ell\in \mZ^\nu}
    |\ell|^{2r}
    \|P_\ell\|_\rF^2,
\end{align}
where $\|\cdot\|_{\rF}$ is the Frobenius norm of matrices\cite{HJ_2007}, and use is made of the inequality $\|\ell\|_{2r} := \sqrt[2r]{\sum_{k=1}^\nu \ell_k^{2r}} \> \nu^{\frac{1-r}{2r}}|\ell|$ for a vector $\ell:= (\ell_k)_{1\< k\< \nu}$. It follows from the convergence of the rightmost  series in (\ref{Hold}) that
$
    \|P_\ell\|_\rF = o(|\ell|^{-r})
$, as $|\ell| \to +\infty$, which, in combination with (\ref{rnu}), leads to $\|P\|_1 \<  \sum_{\ell\in \mZ^\nu}
    \|P_\ell\|_\rF < +\infty$, whereby $P \in \fT_{n,n}$.
\end{proof}

In the finite range interaction case, mentioned in Section~\ref{sec:par}, the SFTs $\cA$, $\cB$ in (\ref{cA}), (\ref{cB}) are trigonometric polynomials and are, therefore, infinitely differentiable. Therefore, in this case, (\ref{rnu}) is satisfied,  and it follows from Lemma~\ref{lem:smooth} and its proof that the invariant covariances of the network variables have an infinitely differentiable  SFT $\cP(\sigma)$ whose entries are organised as ratios of trigonometric polynomials of $\sigma \in \mT^\nu$. In the univariate case of $\nu=1$, this makes   $\cP$ have the structure of spectral densities associated with linear discrete-time invariant systems and admit appropriate inner-outer factorizations\cite{W_1972}.

Similarly to Ref. \refcite{VPJ_2018a}, under the conditions of Theorem~\ref{th:inv}, the internal network variables have an invariant multipoint zero-mean  Gaussian quantum state which is specified completely by the two-point quantum covariances:
\begin{equation}
\label{EXX2}
    \bE(X(t)X(\tau)^\rT)
    =
    \left\{
    \begin{matrix}
    \re^{(t-\tau)A}(P + i\bTheta) & {\rm if} &      t \> \tau \> 0 \\
    (P + i\bTheta) \re^{(\tau-t)A^\rT} & {\rm if} &      \tau \> t \> 0
    \end{matrix}
    \right.,
\end{equation}
where $\bTheta$ is given by (\ref{TJJ}). In accordance with the translation invariant structure of the network, (\ref{EXX2}) is also a block Toeplitz   matrix, which, under the conditions of Lemma~\ref{lem:smooth}, is an element of $\fT_{n,n}$.

\section{Finite-Horizon Quadratic-Exponential Functional}
\label{sec:QEF}

Associated with every lattice site $j \in \mZ^\nu$ is a vector $Z_j$ of $q\< n$ time-varying self-adjoint quantum variables,  which represent physical quantities (in regard to the $j$th component system and its neighbourhood)  whose moderate values are preferable for network performance.
These ``critical'' quantum variables are assumed to be linearly related to the internal network variables by a given real block Toeplitz weighting matrix $S:= (S_{j-k})_{j,k\in \mZ^\nu} \in \fT_{q,n}$ and form an auxiliary quantum process
\begin{equation}
\label{ZX}
    Z
    :=
    (Z_j)_{j \in \mZ^\nu}
    :=
    \Big(
        \sum_{k\in \mZ^\nu} S_{j-k} X_k
    \Big)_{j \in  \mZ^\nu}
    =
    SX.
\end{equation}
The matrix $S$ quantifies the relative importance of the network variables in (\ref{ZX}) depending on a particular control application and is not constrained by PR conditions.
Consider a fragment of the network at a nonempty finite subset $G \subset \mZ^\nu$. Similarly to (\ref{XG}), the corresponding restriction
\begin{equation}
\label{ZG}
    Z_G
    :=
    (Z_j)_{j\in G}
    =
    S_G X
\end{equation}
of the process (\ref{ZX}) is related to the network variables by the matrix
\begin{equation}
\label{SG}
    S_G:= (S_{j-k})_{j \in G, k \in \mZ^\nu}
\end{equation}
with $\# G$ rows. In
the risk-sensitive framework, the performance of the network fragment in terms of the process $Z_G$ over a bounded time interval $[0,T]$ can be described by using a quadratic-exponential functional (QEF)\cite{VPJ_2018a}
\begin{equation}
\label{XiGT}
    \Xi_{\theta,G,T}
    :=
    \bE \re^{\theta Q_{G,T} }.
\end{equation}
This cost imposes an exponential penalty (whose severity is controlled by a scalar parameter $\theta >0$)
on the positive semi-definite self-adjoint quantum variable
\begin{align}
\nonumber
    Q_{G,T}
    & :=
    \frac{1}{2}
    \int_0^T
    \sum_{j\in G}
    Z_j(t)^\rT Z_j(t)
    \rd t\\
\label{QGT}
    & =
    \frac{1}{2}
    \int_0^T
    Z_G(t)^\rT Z_G(t)
    \rd t
    =
    \frac{1}{2}
    \int_0^T
    X(t)^\rT
    S_G^\rT  S_GX(t)
    \rd t,
\end{align}
where the integrand is organised  similarly to the Hamiltonian (\ref{HG}). The restricted weighting matrix $S_G$ in (\ref{SG}) specifies the  quadratic dependence of $Q_{G,T}$ on the past history of the network variables. The quantum average of (\ref{QGT}) is related to the asymptotic behaviour of the QEF (\ref{XiGT}) for small values of the risk sensitivity parameter $\theta$ as
\begin{align}
\nonumber
    \bE Q_{G,T}
    & =
    \d_\theta \ln \Xi_{\theta,G,T}\big|_{\theta = 0}\\
\label{EQGT}
    & =
    \frac{1}{2}
    \int_0^T
    \Tr (S_G \Re \bE(X(t)X(t)^\rT) S_G^\rT)
    \rd t.
\end{align}
In what follows, it is assumed that the network satisfies the conditions of Theorem~\ref{th:inv} and is in the invariant multipoint Gaussian quantum state. In this case, the mean square cost functional (\ref{EQGT}) has the following rate per unit time and lattice site:
\begin{align}
\nonumber
    \frac{1}{T\#G}
    \bE Q_{G,T}
    & =
    \frac{1}{2}
    \bE(Z_0(0)^\rT Z_0(0))    \\
\nonumber
    & =
    \frac{1}{2}
    \sum_{j,k\in \mZ^\nu}
    \Tr (S_{-j} P_{j-k}S_{-k}^\rT)\\
    & =
    \frac{1}{2(2\pi)^\nu}
    \int_{\mT^\nu}
    \Tr (\cS(\sigma) \cP(\sigma)\cS(\sigma)^*)
    \rd \sigma,
\label{EQGT1}
\end{align}
where use is made of the Plancherel identity along with the SFT $\cP$ from (\ref{cP}), (\ref{cPALE}) and the SFT for the weighting matrix $S$ in (\ref{ZX}):
\begin{equation}
\label{cS}
      \cS(\sigma)
    :=
    \sum_{\ell\in \mZ^\nu}
    \re^{-i\ell^\rT \sigma}
      S_\ell,
    \qquad
    \sigma \in \mT^\nu.
\end{equation}
The relations (\ref{EQGT}), (\ref{EQGT1}) suggest that similar limits exist for the   infinite spatio-temporal horizon  asymptotic behaviour of the QEF (\ref{XiGT}):
\begin{align}
\label{UpsGdef}
  \Ups_{\theta,G}
  & :=
  \lim_{T\to +\infty}
  \Big(
  \frac{1}{T}
  \ln \Xi_{\theta, G, T}
  \Big),\\
\label{Ups0}
    \Ups(\theta)
    & :=
  \lim_{G\to \infty}
  \Big(
  \frac{1}{\# G}
  \Ups_{\theta,G}
  \Big),
\end{align}
where ``$G\to \infty$'' will be  specified in Section~\ref{sec:spat} and includes, as a particular case, sequences of unboundedly growing cubes in $\mZ^\nu$.

The QEF growth rate (\ref{Ups0}), as a function of $\theta> 0$,
can be used for  large deviations estimates for quantum trajectories of the network in the form of upper bounds on tail probabilities, similar to those in Refs. \refcite{VPJ_2018a,VPJ_2021}.  More precisely, application of an exponential inequality\cite{S_1996} to
the probability distribution\cite{H_2001}  $\bP_{G,T}$ of the self-adjoint quantum variable $Q_{G,T}$ in (\ref{QGT}) leads to
\begin{equation}
\label{Cramer}
    \bP_{G,T}([\eps, +\infty))
    \<
    \inf_{\theta >0}
    (
    \Xi_{\theta, G,T}
    \re^{-\eps \theta }),
    \qquad
    \eps\> 0,
\end{equation}
for any $T>0$ and nonempty finite set $G\subset \mZ^\nu$. By using (\ref{Cramer}) with $\eps = \alpha T\#G$   in combination with (\ref{UpsGdef}), (\ref{Ups0}), it follows that
\begin{equation}
\label{tail}
    \limsup_{G\to \infty}
    \Big(
    \frac{1}{\# G}
    \limsup_{T\to +\infty}
    \Big(
        \frac{1}{T}
        \ln
        \bP_{G,T}([\alpha T \#G, +\infty))
    \Big)
    \Big)
    \<
    \inf_{\theta>0}
    (
        \Ups(\theta)
        -
        \alpha\theta
    )
\end{equation}
for any $\alpha>0$. The relation (\ref{tail}) provides asymptotic upper bounds for the tail probability distribution of $Q_{G,T}$ in terms of the spatio-temporal QEF growth rate (\ref{Ups0}). These bounds can be enhanced by minimizing $\Ups(\theta)$ (at a suitably chosen $\theta>0$) over an admissible range of parameters of the quantum network.  This provides a risk-sensitive performance criterion for  quantum feedback network control by interconnection, exemplified in Fig.~\ref{fig:loop}. The computation of the  bounds (\ref{tail}) and the QEF minimization   require systematic techniques for evaluating the functional (\ref{Ups0}).

In order to establish the existence of and compute the limits (\ref{UpsGdef}), (\ref{Ups0}) in Sections~\ref{sec:temp}--\ref{sec:homo}, we will now discuss the quantum probabilistic structure of the process $Z$ in (\ref{ZX}). The multipoint zero-mean  Gaussian structure of the invariant quantum state of the internal network variables is inherited by the process $Z$ which has the two-point quantum covariances
\begin{align}
\nonumber
    \bE(Z(t)Z(\tau)^\rT)
    & =
    S\bE(X(t)X(\tau)^\rT)    S^\rT\\
\nonumber
    & =
    \left\{
    \begin{matrix}
    S\re^{(t-\tau)A}(P + i\bTheta)S^\rT & {\rm if} &      t \> \tau \> 0 \\
    S(P + i\bTheta) \re^{(\tau-t)A^\rT}S^\rT & {\rm if} &      \tau \> t \> 0
    \end{matrix}
    \right.\\
\label{EZZ2}
    & =
    V(t-\tau) + i\Lambda(t-\tau),
    \qquad
    t,\tau \> 0.
\end{align}
This time-invariant\footnote{that is, depending on the time difference} $\fT_{q,q}$-valued quantum covariance kernel is
obtained by an appropriate transformation of (\ref{EXX2}). Its real part is
given by
\begin{equation}
\label{V}
    V(\tau)
    =
    \left\{
    {\begin{matrix}
    S\re^{\tau A}P S^\rT& {\rm if}\  \tau\> 0\\
    S P\re^{-\tau A^\rT}S^\rT & {\rm if}\  \tau < 0
    \end{matrix}}
    \right.
    =
    V(-\tau)^\rT,
    \qquad
    \tau \in \mR,
\end{equation}
where $P$ is the matrix (\ref{Elim}) of real parts of the invariant one-point quantum covariances of the internal network variables. The imaginary part of (\ref{EZZ2}) is given by
\begin{equation}
\label{Lambda}
    \Lambda(\tau)
     =
    \left\{
    {\begin{matrix}
    S \re^{\tau A}\bTheta S^\rT & {\rm if}\  \tau\> 0\\
    S \bTheta\re^{-\tau A^{\rT}}S^\rT & {\rm if}\  \tau< 0\\
    \end{matrix}}
    \right.
    =
    -\Lambda(-\tau)^\rT,
    \qquad
    \tau \in \mR,
\end{equation}
and describes the two-point CCRs\cite{VPJ_2018a}
\begin{equation}
\label{ZZcomm}
    [Z(t), Z(\tau)^\rT]
    =
    2i\Lambda(t-\tau),
    \qquad
    t,\tau\>0,
\end{equation}
from which the one-point CCR matrix of $Z$  is recovered as $\Lambda(0) = S\bTheta S^\rT$. Accordingly, the process $Z_G$ in (\ref{ZG}) is in a multipoint zero-mean Gaussian state with the time-invariant $\mC^{G\x G}$-valued quantum covariance kernel
\begin{align}
\nonumber
    \bE(Z_G(t)Z_G(\tau)^\rT)
    & =
    S_G\bE(X(t)X(\tau)^\rT)    S_G^\rT\\
\nonumber
    & =
    \left\{
    \begin{matrix}
    S_G\re^{(t-\tau)A}(P + i\bTheta)S_G^\rT & {\rm if} &      t \> \tau \> 0 \\
    S_G(P + i\bTheta) \re^{(\tau-t)A^\rT}S_G^\rT & {\rm if} &      \tau \> t \> 0
    \end{matrix}
    \right.\\
\label{EZZG}
    & =
    V_G(t-\tau) + i\Lambda_G(t-\tau),
    \qquad
    t,\tau \> 0,
\end{align}
which is
obtained as an appropriate restriction of (\ref{EZZ2}) to the set $G\subset \mZ^\nu$  in view of (\ref{SG}) and is split into the real and imaginary parts $V_G$, $\Lambda_G$.  The latter is given by
\begin{equation}
\label{LambdaG}
    \Lambda_G(\tau)
     =
    \left\{
    {\begin{matrix}
    S_G \re^{\tau A}\bTheta S_G^\rT & {\rm if}\  \tau\> 0\\
    S_G \bTheta\re^{-\tau A^{\rT}}S_G^\rT & {\rm if}\  \tau< 0\\
    \end{matrix}}
    \right.
    =
    -\Lambda_G(-\tau)^\rT,
    \qquad
    \tau \in \mR,
\end{equation}
and,  in accordance with (\ref{Lambda}), (\ref{ZZcomm}),  describes the two-point CCRs
\begin{equation}
\label{ZZcommG}
    [Z_G(t), Z_G(\tau)^\rT]
    =
    2i\Lambda_G(t-\tau),
    \qquad
    t,\tau\>0,
\end{equation}
where $\Lambda_G(0) = S_G\bTheta S_G^\rT$ is the one-point CCR matrix of $Z_G$.
The two-point CCR kernel (\ref{LambdaG}) gives rise to a skew self-adjoint integral operator $\sL_{G,T} : f\mapsto g$  which acts on the Hilbert space $L^2([0,T],\mC^G)$ of square integrable $\mC^G$-valued functions on the time interval $[0,T]$ as
\begin{equation}
\label{sLGT}
    g(t)
    :=
    \int_0^T
    \Lambda_G(t-\tau) f(\tau)
    \rd \tau,
    \qquad
    0\< t \< T.
\end{equation}
The commutation structure (\ref{LambdaG}), (\ref{ZZcommG})  of the process $Z_G$, and the related operator $\sL_{G,T}$ in (\ref{sLGT}),  do not depend on a particular network-field state (\ref{rho}).  The real part of the quantum covariance kernel (\ref{EZZG}) is given by
\begin{equation}
\label{VG}
    V_G(\tau)
    =
    \left\{
    {\begin{matrix}
    S_G\re^{\tau A}P S_G^\rT& {\rm if}\  \tau\> 0\\
    S_G P\re^{-\tau A^\rT}S_G^\rT & {\rm if}\  \tau < 0
    \end{matrix}}
    \right.
    =
    V_G(-\tau)^\rT,
    \qquad
    \tau \in \mR,
\end{equation}
in accordance with (\ref{V}).  The kernel $V_G$ specifies a positive semi-definite self-adjoint integral  operator $\sV_{G,T} : f\mapsto g$ acting on $L^2([0,T], \mC^G)$ as
\begin{equation}
\label{sVGT}
    g(t)
    :=
    \int_0^T
    V_G(t-\tau)f(\tau)\rd \tau,
    \qquad
    0\< t \< T.
\end{equation}
The fact that $\sV_{G,T}  \succcurlyeq 0$ also follows from the stronger property of positive semi-definiteness of the self-adjoint operator $\sV_{G,T} +i\sL_{G,T} $ on $L^2([0,T], \mC^G)$. With (\ref{EZZG}) being a continuous kernel, both $\sV_{G,T}$ and $\sL_{G,T}$ are compact operators\cite{RS_1980}. Application of appropriately  modified results of Refs.~\refcite{VPJ_2019c,VPJ_2021} to the quantum process $Z_G$ in the multipoint Gaussian quantum state allows the QEF (\ref{XiGT}) to be  represented as
\begin{equation}
\label{lnXi}
  \ln \Xi_{\theta,G,T}
    =
    -  \frac{1}{2}
  \Tr (\ln\cos (\theta\sL_{G,T} ) + \ln (\cI - \theta \sV_{G,T} \sK_{\theta,G,T}   )).
\end{equation}
Here, $\cI$ is the identity operator on $L^2([0,T],\mC^G)$, and use is made of a positive definite self-adjoint operator
\begin{equation}
\label{sK}
    \sK_{\theta,G,T}
    :=
    \tanhc(i\theta \sL_{G,T} )
    =
    \tanc (\theta\sL_{G,T} ),
\end{equation}
where $\tanhc z := \tanc (-iz)$ is a hyperbolic version of $\tanc z := \frac{\tan z}{z}$ extended  by continuity as $\tanc 0:=1$. The operator $\sK_{\theta,G,T}  $ is nonexpanding in the sense that $\sK_{\theta,G,T}   \preccurlyeq \cI$. With $\sV_{G,T} \sK_{\theta,G,T}$ being a compact operator (which is isospectral to the positive semi-definite self-adjoint operator $\sqrt{\sK_{\theta,G,T}  } \sV_{G,T}  \sqrt{\sK_{\theta,G,T}  }$), the representation (\ref{lnXi}) is valid under the condition
\begin{equation}
\label{spec}
    \theta \lambda_{\max}(\sV_{G,T} \sK_{\theta,G,T}  ) < 1.
\end{equation}
The representation (\ref{lnXi}) is obtained by applying the results of Refs.  \refcite{VPJ_2019c,VPJ_2021} to the Gaussian quantum process $Z_G$ in (\ref{ZG}), (\ref{QGT})   using its quantum Karhunen-Loeve expansion over an orthonormal eigenbasis of the operator $\sL_{G,T} $ in (\ref{sLGT}), provided the latter has no zero eigenvalues.
A sufficient condition for this property to hold for all sufficiently large subsets $G\subset \mZ^\nu$ and time horizons $T>0$  can be developed in terms of the parameters of the quantum network and the weighting  matrix $S$ in (\ref{ZX}) and its SFT (\ref{cS}). However, in the network setting, this development is more complicated than in the case of a single OQHO (see Theorem 10.1 of Ref. \refcite{VPJ_2021}) and requires a separate investigation, which is beyond the scope of the present study and will be discussed elsewhere. In what follows, the absence of  zero eigenvalues will be used as an assumption.

\section{Temporal QEF Growth Rate}\label{sec:temp}

We will first compute the infinite time horizon asymptotic growth rate (\ref{UpsGdef}) of the QEF (\ref{XiGT}) for a fixed but otherwise arbitrary nonempty finite set $G \subset \mZ^\nu$. The dependence on $G$ will be indicated for the subsequent computation of the limit (\ref{Ups0}) in Section~\ref{sec:spat}. As a preliminary for the theorem below, note that the representation (\ref{lnXi}) is organised as ``trace-analytic''\cite{VP_2010} functionals of operators in the sense that
\begin{equation}
\label{lnXi1}
  \ln \Xi_{\theta,G,T}
    =
    -  \frac{1}{2}
  \Tr (\varphi(\theta \sV_{G,T} \sK_{\theta,G,T}   ) + \psi(\theta\sL_{G,T} )),
\end{equation}
where
\begin{equation}
\label{phipsi}
  \varphi(z):= \ln(1-z),
  \qquad
  \psi(z):= \ln \cos z,
  \qquad
  z \in \mC,
\end{equation}
are analytic functions whose domains contain the spectra of the operators $\theta \sV_{G,T} \sK_{\theta,G,T}   $ (under the condition (\ref{spec})) and $\theta\sL_{G,T} $, at which these functions are evaluated. The structure of the operators $\sV_{G,T}$ in (\ref{sVGT}) and $\sL_{G,T} $ in  (\ref{sLGT}) (with the latter giving rise to $\sK_{\theta,G,T}$ in (\ref{sK}))  plays a part together with the averaging relations of  \ref{sec:averint} in the following theorem on the asymptotic behaviour of the quantity (\ref{lnXi1}),  as
$T\to +\infty$, which is an adaptation of Theorem 8.1 of Ref. \refcite{VPJ_2021}.  Its formulation employs the $\mC^{G\x G}$-valued Fourier transforms
\begin{align}
\label{PhiG}
    \Phi_G(\lambda)
    & :=
    \int_\mR \re^{-i\lambda t }
    V_G(t)
    \rd t
    =
    F_G(i\lambda) F_G(i\lambda)^*,\\
\label{PsiG}
    \Psi_G(\lambda)
    & :=
    \int_\mR \re^{-i\lambda t }
    \Lambda_G(t)
    \rd t
    =
    F_G(i\lambda) \bitJ_m F_G(i\lambda)^*,
    \qquad
    \lambda \in \mR,
\end{align}
of the covariance and commutator kernels (\ref{VG}), (\ref{LambdaG}); see also Eq.~(5.8) in Ref. \refcite{VPJ_2019a}. Here,
\begin{equation}
\label{FG}
    F_G(v)
    :=
    S_G
    (v\bitI_n - A)^{-1}B,
    \qquad
    v \in \mC,
\end{equation}
is the $\mC^{G\x \mZ^\nu}$-valued transfer function from the incremented input quantum Wiener process $W$ of the network in (\ref{dX}) to the stationary Gaussian quantum process $Z_G$ in (\ref{ZG}), with      $\bitI_n:= (\delta_{jk}I_n)_{j,k\in \mZ^\nu}$.  Note that $\Phi_G(\lambda)$ is a complex positive semi-definite Hermitian  matrix, while  $\Psi_G(\lambda)$  is skew Hermitian for any $\lambda \in \mR$, with $\Phi_G+i\Psi_G$ being the Fourier transform  of the quantum covariance kernel $V_G+i\Lambda_G$ from (\ref{EZZG}).

\begin{theorem}
\label{th:limXi}
 Suppose the translation invariant network in (\ref{dXj})--(\ref{cABCD}) satisfies the conditions of Theorem~\ref{th:inv}, and the integral operator  $\sL_{G,T}$ in (\ref{sLGT}) has no zero eigenvalues for all sufficiently large $T>0$. Also, let the risk sensitivity parameter $\theta>0$ in (\ref{XiGT}) satisfy
\begin{equation}
\label{spec1}
    \theta
    \sup_{\lambda \in \mR}
    \lambda_{\max}
    (
        \Phi_G(\lambda)
        \tanc
        (\theta \Psi_G(\lambda))
    )
    < 1,
\end{equation}
where the functions $\Phi_G$, $\Psi_G$ are associated with the finite subset $G \subset \mZ^\nu$ by (\ref{PhiG})--(\ref{FG}). Then the QEF $\Xi_{\theta,G,T}$, defined by  (\ref{XiGT}), (\ref{QGT}), has the following infinite time horizon growth rate (\ref{UpsGdef}):
\begin{equation}
\label{UpsG}
    \Ups_{\theta,G}
     =
    -
    \frac{1}{4\pi}
    \int_{\mR}
    \ln\det
    D_{\theta,G}(\lambda)
    \rd \lambda,
\end{equation}
where
\begin{equation}
\label{DG}
    D_{\theta,G}(\lambda)
    :=
    \cos(
        \theta \Psi_G(\lambda)
    ) -
        \theta
        \Phi_G(\lambda)
        \sinc
        (\theta \Psi_G(\lambda))
\end{equation}
is a $\mC^{G\x G}$-valued function,
and $\sinc z := \frac{\sin z}{z}$ (which is extended  by continuity as $\sinc 0 := 1$).\hfill$\square$
\end{theorem}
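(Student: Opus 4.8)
The plan is to obtain (\ref{UpsG}) from the finite-horizon representation (\ref{lnXi})--(\ref{spec}) — valid for the fixed finite block $G$ because, by assumption, $\sL_{G,T}$ has no zero eigenvalues for large $T$ — by letting $T\to+\infty$ through a Szeg\H{o}-type averaging theorem for traces of analytic functions of truncated convolution (Wiener--Hopf) operators and their products, and then collapsing the two resulting log-determinants into the single factor $D_{\theta,G}$ via an elementary functional-calculus identity. Equivalently, one works from (\ref{lnXi1}) with $\varphi,\psi$ from (\ref{phipsi}).

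First I would record the structural facts underpinning the averaging. By continuity of $\cA$ on the compact torus $\mT^\nu$, the stability condition (\ref{stab}) yields a uniform bound $\|\re^{tA}\|\< c\,\re^{-\beta t}$ with $\beta>0$, so the $\mC^{G\x G}$-valued kernels $V_G$, $\Lambda_G$ of (\ref{VG}), (\ref{LambdaG}) decay exponentially in $|\tau|$; hence their full-line Fourier transforms $\Phi_G$, $\Psi_G$ in (\ref{PhiG}), (\ref{PsiG}) are bounded analytic matrix functions, with $\Phi_G(\lambda)\succcurlyeq 0$ Hermitian and $\Psi_G(\lambda)$ skew-Hermitian, and from $F_G(v)=S_G(v\bitI_n-A)^{-1}B$ in (\ref{FG}) both are $O(1/\lambda)$ at infinity. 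The operators $\sV_{G,T}$ of (\ref{sVGT}) and $\sL_{G,T}$ of (\ref{sLGT}) are the restrictions to $[0,T]$ of the convolutions with symbols $\Phi_G$ and $\Psi_G$. I would then invoke the matrix-symbol, continuous-time averaging results of \ref{sec:aver}, \ref{sec:averint} (the adaptation of Theorem~8.1 of Ref.~\refcite{VPJ_2021}): for an analytic $f$ whose domain contains the band $\bigcup_\lambda \mathrm{spec}(M(\lambda))$ of a symbol $M$, one has $\Tr f(W_{M,T})=\frac{T}{2\pi}\int_\mR\Tr f(M(\lambda))\,\rd\lambda+O(1)$ as $T\to+\infty$, with $W_{M,T}$ the truncated convolution with symbol $M$, and the same asymptotics persist for finite products of such truncations and of functions of them, the symbol of a product being the pointwise matrix product of the symbols. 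In particular $\sK_{\theta,G,T}=\tanc(\theta\sL_{G,T})$ of (\ref{sK}) agrees, up to a trace-class endpoint defect, with the truncated convolution of symbol $\tanc(\theta\Psi_G(\lambda))$ (a positive-definite Hermitian matrix, since $\theta\Psi_G(\lambda)=iH$ with $H=H^*$ and $\tanc(iH)=\tanhc(H)$), so $\theta\sV_{G,T}\sK_{\theta,G,T}$ carries the symbol $\theta\Phi_G(\lambda)\tanc(\theta\Psi_G(\lambda))$.

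Next I would use the hypothesis (\ref{spec1}) twice. Since $\theta\Phi_G(\lambda)\tanc(\theta\Psi_G(\lambda))$ is isospectral to the positive semi-definite $\theta\sqrt{\tanc(\theta\Psi_G(\lambda))}\,\Phi_G(\lambda)\sqrt{\tanc(\theta\Psi_G(\lambda))}$, and $\lambda\mapsto\lambda_{\max}(\Phi_G(\lambda)\tanc(\theta\Psi_G(\lambda)))$ is continuous and vanishes at infinity, (\ref{spec1}) provides a uniform gap: these eigenvalues lie in $[0,1-\eps]$ for some $\eps>0$ and all $\lambda$. Combined with the convergence of the spectrum of truncated convolution operators to the band of their symbol, this yields (\ref{spec}) for all sufficiently large $T$, so (\ref{lnXi}) is legitimate; it also makes $I-\theta\Phi_G(\lambda)\tanc(\theta\Psi_G(\lambda))\succcurlyeq\eps I$ and, as $\Phi_G,\Psi_G=O(1/\lambda)$, makes $D_{\theta,G}(\lambda)=I+O(1/\lambda^{2})$, so $\ln\det D_{\theta,G}$ is finite, continuous and $O(1/\lambda^{2})$ at infinity, hence integrable on $\mR$. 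Applying the averaging asymptotics to the two terms of (\ref{lnXi1}), dividing by $T$ and letting $T\to+\infty$ gives
\begin{equation*}
  \Ups_{\theta,G}
  =
  -\frac{1}{4\pi}\int_\mR
  \Big(
    \ln\det\big(I-\theta\Phi_G(\lambda)\tanc(\theta\Psi_G(\lambda))\big)
    +\ln\det\cos(\theta\Psi_G(\lambda))
  \Big)\rd\lambda .
\end{equation*}
Finally, since the spectrum of the skew-Hermitian $\theta\Psi_G(\lambda)$ is imaginary, $\cos(\theta\Psi_G(\lambda))\succcurlyeq I$ and the scalar identity $\tanc(w)\cos(w)=\sinc(w)$ holds there, so the functional calculus gives $\tanc(\theta\Psi_G)\cos(\theta\Psi_G)=\sinc(\theta\Psi_G)$, whence $\big(I-\theta\Phi_G\tanc(\theta\Psi_G)\big)\cos(\theta\Psi_G)=\cos(\theta\Psi_G)-\theta\Phi_G\sinc(\theta\Psi_G)=D_{\theta,G}$; multiplicativity of the determinant turns the integrand above into $\ln\det D_{\theta,G}(\lambda)$, which is (\ref{UpsG}) with (\ref{DG}).

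I expect the main obstacle to be the rigorous justification of the product/composite part of the averaging step — that $\Tr\varphi(\theta\sV_{G,T}\tanc(\theta\sL_{G,T}))=\frac{T}{2\pi}\int_\mR\Tr\varphi(\theta\Phi_G(\lambda)\tanc(\theta\Psi_G(\lambda)))\,\rd\lambda+o(T)$. This requires trace-norm control of the endpoint (Hankel-type) defects of truncated convolutions, of the error in replacing $\tanc(\theta\sL_{G,T})$ by the corresponding truncated convolution, and of the propagation of these errors through the analytic map $\varphi$ of (\ref{phipsi}) under the uniform spectral constraint from (\ref{spec1}). These estimates are the $\mC^{G\x G}$-valued, finite-block counterpart of the arguments behind Theorem~8.1 of Ref.~\refcite{VPJ_2021}, the only genuinely new bookkeeping being that the symbols now take values in matrices of the varying order $\#G$.
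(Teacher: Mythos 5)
Your proposal is correct and follows essentially the same route as the paper: both start from the trace-analytic finite-horizon representation (\ref{lnXi1}), pass to the limit $T\to+\infty$ via the Szeg\H{o}-type averaging relations of \ref{sec:averint} (the adaptation of Theorem 8.1 of Ref.~\refcite{VPJ_2021}), and merge the two log-determinants through the identity $\tanc z\,\cos z=\sinc z$. The only real difference is presentational: the paper handles the noncommuting term by the explicit series expansion (\ref{phiPK}) (Maclaurin series of $\tanc$ plus the logarithmic series) with termwise use of (\ref{lim1}) and dominated convergence, whereas you defer that same step to symbol/endpoint-defect estimates, i.e.\ an equivalent appeal to the same underlying averaging lemma.
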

\begin{proof}
The proof is similar to that of Theorem 8.1 of Ref. \refcite{VPJ_2021} and is outlined for completeness. Since the case of one integral operator is free from noncommutativity, (\ref{lim1}) applies directly to the term $\Tr \psi(\theta\sL_{G,T})$ in (\ref{lnXi1}), with the function $\psi$ given by (\ref{phipsi}):
\begin{align}
\nonumber
    \lim_{T\to +\infty}
    \Big(
        \frac{1}{T}
        \Tr \psi(\theta\sL_{G,T} )
    \Big)
    & =
    \frac{1}{2\pi}
    \int_{\mR}
    \Tr \ln\cos(
        \theta \Psi_G(\lambda)
    )
    \rd \lambda\\
\label{psilim}
    & =
    \frac{1}{2\pi}
    \int_{\mR}
    \ln\det \cos(
        \theta \Psi_G(\lambda)
    )
    \rd \lambda,
\end{align}
where the identity $\Tr \ln N = \ln\det N$ for square matrices $N$ is used along with the  Fourier transform (\ref{PsiG})
of the commutator kernel (\ref{LambdaG}).
Application of (\ref{lim1}) to $\Tr \varphi(\theta \sV_{G,T} \sK_{\theta,G,T}   )$ in (\ref{lnXi1}),  with the function $\varphi$ from (\ref{phipsi}),  involves  two noncommuting integral operators $\sV_{G,T} $, $\sL_{G,T} $ in (\ref{sVGT}), (\ref{sLGT}) and the related operator $\sK_{\theta,G,T}  $ from (\ref{sK}) as
\begin{align}
\nonumber
    \varphi(\theta \sV_{G,T}  \sK_{\theta,G,T}  )
    & =
    -
    \sum_{N=1}^{+\infty}
    \frac{1}{N}
    \theta^N
    (\sV_{G,T}  \sK_{\theta,G,T})^N\\
\label{phiPK}
    & =
    -
    \sum_{N=1}^{+\infty}
    \frac{1}{N}
    \theta^N
    \sum_{k_1, \ldots, k_N = 0}^{+\infty}
    \rprod_{j=1}^N
    \big(
    c_{k_j}
    \theta^{2k_j}
    \sV_{G,T}
    \sL_{G,T} ^{2k_j}
    \big)
\end{align}
under the condition (\ref{spec}).
Here, the Maclaurin series expansion $\tanc z = \sum_{k=0}^{+\infty} c_k z^{2k}$ (with coefficients $c_k \in \mR$) takes into account the symmetry of the {\tt tanc} function.  By applying (\ref{lim1}) to (\ref{phiPK}) in combination with a dominated convergence argument, it follows that
\begin{align}
\nonumber
     \lim_{T\to +\infty}&
    \Big(
        \frac{1}{T}
        \Tr \varphi(\theta\sV_{G,T} \sK_{\theta,G,T}  )
    \Big)\\
\nonumber
    & =
    -
    \frac{1}{2\pi}
    \sum_{N=1}^{+\infty}
    \frac{1}{N}
    \theta^N
    \sum_{k_1, \ldots, k_N = 0}^{+\infty}
    \int_\mR
    \Tr
    \rprod_{j=1}^N
    \big(
    c_{k_j}
    \theta^{2k_j}
    \Phi_G(\lambda)
    \Psi_G(\lambda)^{2k_j}
    \big)
    \rd \lambda\\
\nonumber
    & =
    \frac{1}{2\pi}
    \int_{\mR}
    \Tr
    \ln(
    I_{\#G} -
        \theta
        \Phi_G(\lambda)
        \tanc
        (\theta \Psi_G(\lambda))
    )
    \rd \lambda\\
\label{philim}
    & =
    \frac{1}{2\pi}
    \int_{\mR}
    \ln\det(
    I_{\#G} -
        \theta
        \Phi_G(\lambda)
        \tanc
        (\theta \Psi_G(\lambda))
    )
    \rd \lambda,
\end{align}
where the Fourier transforms (\ref{PhiG}), (\ref{PsiG}) are used.  The limit relation (\ref{philim}) holds under the condition (\ref{spec1}) which is a frequency-domain counterpart of (\ref{spec}).
A combination of (\ref{psilim}), (\ref{philim}) leads to the following  asymptotic growth rate (\ref{UpsGdef}) for the quantity (\ref{lnXi1}):
\begin{align}
\nonumber
    \Ups_{\theta, G}
    =&
    -
    \frac{1}{4\pi}
    \int_{\mR}
    \ln\det(
    I_{\#G} -
        \theta
        \Phi_G(\lambda)
        \tanc
        (\theta \Psi_G(\lambda))
    )
    \rd \lambda\\
\nonumber
    & -
    \frac{1}{4\pi}
    \int_{\mR}
    \ln\det \cos(
        \theta \Psi_G(\lambda)
    )
    \rd \lambda\\
\label{Xilim}
    = &
    -
    \frac{1}{4\pi}
    \int_{\mR}
    \ln\det(
    \cos(
        \theta \Psi_G(\lambda)
    ) -
        \theta
        \Phi_G(\lambda)
        \sinc
        (\theta \Psi_G(\lambda))
    )
    \rd \lambda,
\end{align}
where the identity
$\tanc z \cos z = \sinc z$ is applied to the matrix $\theta \Psi_G(\lambda)$.  In view of (\ref{DG}),  the relation (\ref{Xilim}) is identical to (\ref{UpsG}). 
\end{proof}

Under the condition (\ref{spec1}), the quantity $-\ln\det D_{\theta,G}(\lambda)$ is a nonnegative-valued  symmetric function of the frequency $\lambda\in \mR$. This symmetry allows the integration in (\ref{UpsG}) to be reduced as
$    \Ups_{\theta,G}
     =
    -
    \frac{1}{2\pi}
    \int_0^{+\infty}
    \ln\det
    D_{\theta,G}(\lambda)
    \rd \lambda
$.

\section{Spatio-Temporal Growth Rate of the QEF}
\label{sec:spat}

We will now proceed to the spatio-temporal growth rate (\ref{Ups0}) of the QEF (\ref{XiGT}). In view of (\ref{DG}), the representation (\ref{UpsG}) of the temporal QEF growth rate also has a trace-analytic structure
\begin{equation}
\label{UpsG1}
    \Ups_{\theta,G}
    =
    -
    \frac{1}{4\pi}
    \int_{\mR}
    \Tr
    \big(
    \varphi(
        \theta
        \Phi_G(\lambda)
        \tanc
        (\theta \Psi_G(\lambda))
        )
        +
        \psi(\theta \Psi_G(\lambda))
    \big)
    \rd \lambda,
\end{equation}
involving the analytic functions (\ref{phipsi}) along with the $\mC^{G\x G}$-valued functions $\Phi_G$, $\Psi_G$  from (\ref{PhiG}), (\ref{PsiG}). At any given frequency $\lambda\in \mR$,  each of the matrices $\Phi_G(\lambda)$, $\Psi_G(\lambda)$ is organised as the restriction $f_G:= (f_{j-k})_{j,k\in G}$ of a complex  block Toeplitz matrix $f:= (f_{j-k})_{j,k\in \mZ^\nu} \in \fT_{n,n}$ to $G \subset \mZ^\nu$. This will be combined with the averaging relations of \ref{sec:aver} in the theorem below on the asymptotic behavior of (\ref{UpsG}) for ``large'' fragments of the network. More precisely, a nonempty finite set $G \subset \mZ^\nu$ is said to tend to infinity ($G\to \infty$)  if
\begin{equation}
\label{Ginf}
    \Delta_G(\ell)
    :=
    \frac{\#(G \setminus (G+\ell))}{\#G}
  \to 0,
  \qquad
  \ell \in \mZ^\nu.
\end{equation}
The function $\Delta_G: \mZ^\nu\to [0,1]$ is symmetric (that is, $\Delta_G(\ell) = \Delta_G(-\ell)$ for all $\ell\in \mZ^\nu$) and quantifies the relative discrepancy between the set $G$ and its translations $G + \ell = \{z+\ell: z\in G\}$, so that
$$
    \frac{\#(G\Delta (G+\ell))}{\# G}
    =
    2\Delta_G(\ell),
    \qquad
    \frac{\#(G\bigcap(G+\ell))}{\# G} = 1- \Delta_G(\ell),
$$
where $\alpha \Delta \beta $ denotes the symmetric  difference of sets $\alpha$, $\beta$. Accordingly, $\Delta_G(\ell)<1$ holds if and only if $\ell \in G-G:= \{x-y: x,y\in G\}$. Also note that $\sum_{\ell \in \mZ^\nu} (1- \Delta_G(\ell)) = \#G$, whereby (\ref{Ginf}) implies that $\#G \to +\infty$. The latter property is not only necessary but is also  sufficient for $G \to \infty$ in certain classes of sets $G$. In particular, for a cube $G:= \{0,\ldots, L-1\}^\nu$, which consists of $\#G = L^\nu$ lattice sites, where $L$ is a positive integer, the left-hand side of  (\ref{Ginf}) takes the form $\Delta_G(\ell) = 1- \prod_{k=1}^\nu \max(0,1-|\ell_k|/L)$ for any $\ell:=(\ell_k)_{1\< k\< \nu} \in \mZ^\nu$. In this case, the condition $G\to \infty$ in the sense of (\ref{Ginf}) reduces to the side length of the cube unboundedly growing: $L\to +\infty$. Returning to (\ref{Ginf}) in the general case (when $G$ is not necessarily a cube), we note that the convergence $G\to \infty$  is metrizable in the sense of its equivalence to
\begin{equation}
\label{Ginfmet}
    \sum_{\ell \in \mZ^\nu}
    2^{-|\ell_1|-\ldots-|\ell_\nu|}
    \Delta_G(\ell)\to 0.
\end{equation}
The following theorem, which is concerned with the asymptotic behaviour  of the quantity (\ref{UpsG}),   as $G\to \infty$, employs the $\mC^{q\x q}$-valued spatio-temporal Fourier transforms
\begin{align}
\label{Phi}
    \Phi(\sigma,\lambda)
    & :=
    \sum_{\ell \in \mZ^\nu}
    \int_{\mR}
    \re^{-i(\ell^\rT \sigma + \lambda t )}
    V_\ell(t)
    \rd t
    =
    F(\sigma,i\lambda)
    F(\sigma,i\lambda)^*,\\
\label{Psi}
    \Psi(\sigma,\lambda)
    & :=
    \sum_{\ell \in \mZ^\nu}
    \int_{\mR}
    \re^{-i(\ell^\rT \sigma + \lambda t )}
    \Lambda_\ell(t)
    \rd t
    =
    F(\sigma,i\lambda)
    J_m
    F(\sigma,i\lambda)^*,
    \qquad
    \sigma\in \mT^\nu,\
    \lambda \in \mR,
\end{align}
of the invariant two-point covariance and commutator kernels of the process $Z$ in (\ref{ZX}). Here,
\begin{equation}
\label{F}
    F(\sigma,s)
    :=
    \cS(\sigma)(sI_n - \cA(\sigma))^{-1}\cB(\sigma),
    \qquad
    \sigma \in \mT^\nu,\
    s \in \mC,
\end{equation}
is the spatio-temporal transfer function from the incremented input fields of the network to $Z$. Similarly to (\ref{PhiG}), (\ref{PsiG}), $\Phi(\sigma,\lambda)$ is a complex positive semi-definite Hermitian  matrix, while  $\Psi(\sigma,\lambda)$  is skew Hermitian for any $\sigma \in \mT^\nu$, $\lambda \in \mR$, and $\Phi+i\Psi$ is the Fourier transform  of the quantum covariance kernel $V+i\Lambda$ from (\ref{EZZ2}). The function $\Phi+i\Psi: \mT^\nu \x \mR \to \mC^{q\x q}$ can be interpreted as a ``quantum spectral density'' of the process $Z$.

\begin{theorem}
\label{th:limXiG}
 Suppose the translation invariant network in (\ref{dXj})--(\ref{cABCD}) satisfies the conditions of Theorem~\ref{th:inv}, and the integral operator  $\sL_{G,T}$ in (\ref{sLGT}) has no zero eigenvalues for all sufficiently large $T>0$ and finite sets $G\subset \mZ^\nu$ in the sense of (\ref{Ginf}) (or (\ref{Ginfmet})). Also, let the risk sensitivity parameter $\theta>0$ in (\ref{XiGT}) satisfy
\begin{equation}
\label{rad}
    \theta
    \sup_{\sigma \in \mT^\nu,\, \lambda \in \mR}
    \lambda_{\max}
    (
        \Phi(\sigma,\lambda)
        \tanc
        (\theta \Psi(\sigma,\lambda))
    )
    < 1,
\end{equation}
where the functions $\Phi$, $\Psi$ are given by (\ref{Phi}), (\ref{Psi}). Then the QEF $\Xi_{\theta,G,T}$, defined by  (\ref{XiGT}), (\ref{QGT}), has the following spatio-temporal growth rate (\ref{Ups0}):
\begin{equation}
\label{Ups}
    \Ups(\theta)
     =
    -
    \frac{1}{2(2\pi)^{\nu+1}}
    \int_{\mT^\nu\x \mR}
    \ln\det
    D_{\theta}(\sigma,\lambda)
    \rd \sigma\rd \lambda,
\end{equation}
where the function $D_\theta: \mT^\nu\x \mR \to \mC^{q\x q}$ is given by
\begin{equation}
\label{D}
    D_{\theta}(\sigma,\lambda)
    :=
    \cos(
        \theta \Psi(\sigma,\lambda)
    ) -
        \theta
        \Phi(\sigma,\lambda)
        \sinc
        (\theta \Psi(\sigma,\lambda)).
\end{equation}
\hfill$\square$
\end{theorem}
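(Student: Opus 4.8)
The plan is to obtain the spatio-temporal rate $\Ups(\theta)$ by letting $G\to\infty$ in the trace-analytic representation (\ref{UpsG1}) of the temporal rate $\Ups_{\theta,G}$ furnished by Theorem~\ref{th:limXi}, combining a Szeg\H{o}-type averaging identity for block Toeplitz matrices (the averaging relations of \ref{sec:aver}) with a dominated-convergence argument in the temporal frequency $\lambda$. First I would observe that, for each fixed $\lambda\in\mR$, the matrices $\Phi_G(\lambda)$ and $\Psi_G(\lambda)$ in (\ref{PhiG}), (\ref{PsiG}) are the restrictions to $G$ of the block Toeplitz matrices with SFTs $\Phi(\cdot,\lambda)$ and $\Psi(\cdot,\lambda)$ in (\ref{Phi}), (\ref{Psi}): indeed $V_G(t)=(V_{j-k}(t))_{j,k\in G}$ and $\Lambda_G(t)=(\Lambda_{j-k}(t))_{j,k\in G}$ by (\ref{VG}), (\ref{LambdaG}), and the spatial Fourier transform intertwines these restrictions with the symbols. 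I would also verify that (\ref{rad}) implies the hypothesis (\ref{spec1}) of Theorem~\ref{th:limXi} for every finite $G$ (the relevant supremum for the compression to $G$ being controlled, via the comparison between finite compressions of block Toeplitz operators and their symbols, by the supremum over $\sigma\in\mT^\nu$ of the corresponding symbol quantity in (\ref{rad})), so that (\ref{UpsG1}) is available for all $G$ and $\frac{1}{\#G}\Ups_{\theta,G} = -\frac{1}{4\pi}\int_\mR \frac{1}{\#G}\Tr\big(\varphi(\theta\Phi_G(\lambda)\tanc(\theta\Psi_G(\lambda))) + \psi(\theta\Psi_G(\lambda))\big)\,\rd\lambda$.

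\emph{Pointwise-in-$\lambda$ averaging.} For each fixed $\lambda\in\mR$ I would then apply the averaging relations of \ref{sec:aver} to trace-analytic functionals of restrictions of block Toeplitz matrices to finite sets $G\to\infty$ in the sense of (\ref{Ginf}) (equivalently (\ref{Ginfmet})). This is the block-Toeplitz analogue of the averaging for integral operators of \ref{sec:averint} used in the proof of Theorem~\ref{th:limXi}; in particular it accommodates the noncommutativity of $\Phi_G(\lambda)$ and $\Psi_G(\lambda)$ introduced by the $\tanc$ factor. Using $\Tr\ln N = \ln\det N$, it yields
\begin{align*}
    \lim_{G\to\infty}&\frac{1}{\#G}\Tr\big(\varphi(\theta\Phi_G(\lambda)\tanc(\theta\Psi_G(\lambda))) + \psi(\theta\Psi_G(\lambda))\big)\\
    & = \frac{1}{(2\pi)^\nu}\int_{\mT^\nu}\ln\det\big(I_q - \theta\Phi(\sigma,\lambda)\tanc(\theta\Psi(\sigma,\lambda))\big)\,\rd\sigma\\
    & \quad + \frac{1}{(2\pi)^\nu}\int_{\mT^\nu}\ln\det\cos(\theta\Psi(\sigma,\lambda))\,\rd\sigma .
\end{align*}

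\emph{Interchange of limit and frequency integral.} The hard part will be passing $\lim_{G\to\infty}$ under the integral $\int_\mR(\cdot)\,\rd\lambda$. I would exhibit a $G$-independent integrable majorant of $\lambda\mapsto\big|\frac{1}{\#G}\Tr(\varphi(\theta\Phi_G(\lambda)\tanc(\theta\Psi_G(\lambda))) + \psi(\theta\Psi_G(\lambda)))\big|$: the transfer functions $F_G(i\lambda)=S_G(i\lambda\bitI_n - A)^{-1}B$ of (\ref{FG}) are bounded uniformly in $G$ and $\lambda$ — the resolvent bound $\sup_{\sigma\in\mT^\nu,\,\lambda\in\mR}\|(i\lambda I_n - \cA(\sigma))^{-1}\| < +\infty$ following from the uniform Hurwitz property of $\cA(\sigma)$ encoded in the stability condition (\ref{stab}) of Theorem~\ref{th:inv} — and decay like $O(1/|\lambda|)$ as $|\lambda|\to\infty$; hence $\|\Phi_G(\lambda)\|$ and $\|\Psi_G(\lambda)\|$ are $O(1/|\lambda|^2)$ uniformly in $G$, making the integrand $O(1/|\lambda|^2)$ for large $|\lambda|$, while on any bounded $\lambda$-interval the strict inequality (\ref{rad}) keeps $I_q - \theta\Phi(\sigma,\lambda)\tanc(\theta\Psi(\sigma,\lambda))$ uniformly nonsingular and the integrand uniformly bounded. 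Dominated convergence then licenses the interchange and gives
\[
    \Ups(\theta) = -\frac{1}{4\pi(2\pi)^\nu}\int_{\mT^\nu\x\mR}\big(\ln\det(I_q - \theta\Phi\tanc(\theta\Psi)) + \ln\det\cos(\theta\Psi)\big)\,\rd\sigma\,\rd\lambda .
\]

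\emph{Collapse to $D_\theta$.} Finally, since $\tanc(\theta\Psi)$ and $\cos(\theta\Psi)$ commute (both being functions of $\theta\Psi(\sigma,\lambda)$) and $\tanc z\,\cos z = \sinc z$, one has $(I_q - \theta\Phi\tanc(\theta\Psi))\cos(\theta\Psi) = \cos(\theta\Psi) - \theta\Phi\sinc(\theta\Psi) = D_\theta(\sigma,\lambda)$ as in (\ref{D}), whence the two log-determinants add up to $\ln\det D_\theta(\sigma,\lambda)$; together with $4\pi(2\pi)^\nu = 2(2\pi)^{\nu+1}$ this is precisely (\ref{Ups}). I would also record the Hermitian symmetry $D_\theta(\sigma,-\lambda) = \overline{D_\theta(-\sigma,\lambda)}$, which lets the $\lambda$-integration in (\ref{Ups}) be restricted to $[0,+\infty)$, paralleling the remark following Theorem~\ref{th:limXi}. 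The main obstacle throughout is the uniform-in-$G$ control needed to justify the exchange of the limit with the frequency integral; the averaging step itself is an application of \ref{sec:aver} once $\Phi_G(\lambda)$, $\Psi_G(\lambda)$ have been identified as Toeplitz compressions.
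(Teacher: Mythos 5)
Your proposal follows essentially the same route as the paper's proof: starting from the temporal rate representation (\ref{UpsG1}) of Theorem~\ref{th:limXi}, identifying $\Phi_G(\lambda)$, $\Psi_G(\lambda)$ as restrictions of block Toeplitz matrices with symbols $\Phi(\cdot,\lambda)$, $\Psi(\cdot,\lambda)$, applying the averaging relation (\ref{limhG}) of \ref{sec:aver} pointwise in $\lambda$ under (\ref{rad}), interchanging the limit with the frequency integral by dominated convergence, and collapsing the two log-determinants via $\tanc z\,\cos z=\sinc z$ into $\ln\det D_\theta$. Your added details (the uniform resolvent bound giving the $O(1/|\lambda|^2)$ majorant, and the symmetry remark) are consistent elaborations of the same argument, so the proposal is correct and matches the paper.
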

\begin{proof}
The proof is similar to that of Theorem~\ref{th:limXi} except that the averaging relations of \ref{sec:aver} are used here instead of \ref{sec:averint} and are applied to the integrands in (\ref{UpsG}) pointwise at every frequency $\lambda$,  which is followed by a dominated convergence argument.  Application of (\ref{limhG}) to the second integrand in  (\ref{UpsG}) yields
\begin{align}
\nonumber
    \lim_{G\to \infty}
    \Big(
        \frac{1}{\#G}
        \Tr \psi(\theta \Psi_G(\lambda))
    \Big)
    & =
    \frac{1}{(2\pi)^\nu}
    \int_{\mT^\nu}
    \Tr \ln\cos(
        \theta \Psi(\sigma,\lambda)
    )
    \rd \sigma\\
\label{psilimG}
    & =
    \frac{1}{(2\pi)^\nu}
    \int_{\mT^\nu}
    \ln\det \cos(
        \theta \Psi(\sigma,\lambda)
    )
    \rd \sigma,
    \qquad
    \lambda \in \mR,
\end{align}
where use is made of the function $\psi$ from (\ref{phipsi}) and the  Fourier transform (\ref{Psi}) of the commutator kernel (\ref{Lambda}). Application of (\ref{limhG}) to the first integrand in (\ref{UpsG1}) leads to
\begin{align}
\nonumber
     \lim_{G\to \infty}&
    \Big(
        \frac{1}{\#G}
        \Tr \varphi(\theta\Phi_G(\lambda)\tanc(\theta\Psi_G(\lambda)))
    \Big)\\
\nonumber
    & =
    \frac{1}{(2\pi)^\nu}
    \int_{\mT^\nu}
    \Tr
    \ln(
    I_q -
        \theta
        \Phi(\sigma,\lambda)
        \tanc
        (\theta \Psi(\sigma,,\lambda))
    )
    \rd \sigma\\
\label{philimG}
    & =
    \frac{1}{(2\pi)^\nu}
    \int_{\mT^\nu}
    \ln\det(
    I_q -
        \theta
        \Phi(\sigma,\lambda)
        \tanc
        (\theta \Psi(\sigma,\lambda))
    )
    \rd \sigma,
    \qquad
    \lambda \in \mR,
\end{align}
where
the Fourier transform (\ref{Phi}) of the real covariance kernel (\ref{V}) is used together with (\ref{Psi}).  The limit (\ref{philimG}) holds under the condition (\ref{rad}) which is a spatio-temporal frequency-domain counterpart of (\ref{spec1}).
By combining (\ref{psilimG}), (\ref{philimG}), it follows that the quantity (\ref{UpsG1}) has the following  asymptotic growth rate (\ref{Ups0}):
\begin{align}
\nonumber
    \Ups(\theta)
    =&
    -
    \frac{1}{2(2\pi)^{\nu+1}}
    \int_{\mT^\nu\x \mR}
    \ln\det(
    I_q -
        \theta
        \Phi(\sigma,\lambda)
        \tanc
        (\theta \Psi(\sigma,\lambda))
    )
    \rd \sigma\rd \lambda\\
\nonumber
    & -
    \frac{1}{2(2\pi)^{\nu+1}}
    \int_{\mT^\nu\x \mR}
    \ln\det \cos(
        \theta \Psi(\sigma,\lambda)
    )
    \rd \sigma\rd \lambda\\
\label{XilimG}
    = &
    -
    \frac{1}{2(2\pi)^{\nu+1}}
    \int_{\mT^\nu\x \mR}
    \ln\det(
    \cos(
        \theta \Psi(\sigma,\lambda)
    ) -
        \theta
        \Phi(\sigma,\lambda)
        \sinc
        (\theta \Psi(\sigma,\lambda))
    )
    \rd \sigma\rd \lambda.
\end{align}
In view of (\ref{D}), the relation (\ref{XilimG}) establishes (\ref{Ups}). 
\end{proof}

Consider Theorem~\ref{th:limXiG} in the limiting classical case obtained formally by letting $\Theta = 0$ in (\ref{Xcomm}) and $J_m = 0$ in (\ref{Wcomm}). In this case, (\ref{dX}) is an SDE driven by independent standard Wiener processes $W_k$ with values in $\mR^m$ at lattice sites $k\in \mZ^\nu$. The classical invariant measure of the network makes $Z$ in (\ref{ZX}) a stationary $(\mR^q)^{\mZ^\nu}$-valued Gaussian random process\cite{GS_2004} with zero mean and the spectral density $\Phi$ in (\ref{Phi}). Accordingly, the function $\Psi$ in (\ref{Psi}) vanishes, and the condition (\ref{rad}) takes the form
\begin{equation}
\label{class}
    \theta
    <
    \theta_*
    :=
    \frac{1}{\sup_{\sigma \in \mT^\nu,\, \lambda \in \mR}
    \lambda_{\max}
    (
        \Phi(\sigma, \lambda)
    )}
    =
    \frac{1}{\|F\|_\infty^2},
\end{equation}
involving the spatio-temporal counterpart
$$
    \|F\|_\infty
    :=
    \sup_{\sigma \in \mT^\nu,\, \lambda \in \mR}\|F(\sigma, i\lambda)\|
$$
of the Hardy space $\cH_\infty$-norm  for  the transfer function $F$ in  (\ref{F}) which factorizes the spectral density $\Phi$ in (\ref{Phi}). In this case, the right-hand side of (\ref{Ups}) reduces to
\begin{equation}
\label{Ups*}
    \Ups_*(\theta)
    :=
    -
    \frac{1}{2(2\pi)^{\nu+1}}
    \int_{\mT^\nu \x \mR}
    \ln\det(
        I_q
        -
        \theta
        \Phi(\sigma,\lambda)
    )
    \rd \sigma
    \rd \lambda
\end{equation}
in view of (\ref{D}) and corresponds to the $\cH_\infty$-entropy integral of Ref.~\refcite{MG_1990} (see also Ref.~\refcite{AK_1981}).

In contrast to its classical counterpart (\ref{Ups*}), the QEF growth rate (\ref{Ups}) in the quantum case depends  on both functions $\Phi$, $\Psi$ which constitute the quantum spectral density $\Phi + i\Psi$ of the process $Z$ in (\ref{ZX}). Furthermore, the condition (\ref{rad}) is transcendental in $\theta$ and, unlike (\ref{class}),  does not admit a closed-form representation. However, since {\tt tanc} on the imaginary axis (that is, {\tt tanhc} on the real axis) takes values in the interval $(0,1]$ and hence, $0\prec  \tanc (\theta \Psi)\preccurlyeq I_q $,  then
$$
    \lambda_{\max}
    (
        \Phi
        \tanc
        (\theta \Psi)
    )
    =
    \lambda_{\max}
    \big(
        \sqrt{\tanc
        (\theta \Psi)}
        \Phi
        \sqrt{\tanc
        (\theta \Psi)}
    \big)
    \< \lambda_{\max}(\Phi)
$$
everywhere in $\mT^\nu\x \mR$, so that the fulfillment of the classical constraint (\ref{class}) secures (\ref{rad}).

\section{A Homotopy Technique for Computing the QEF Growth Rate}
\label{sec:homo}

Consider the computation of the QEF growth rate (\ref{Ups}) by a technique,  which
resembles the homotopy methods for numerical solution of parameter dependent algebraic equations\cite{MB_1985} and exploits the specific dependence of $\Ups(\theta)$ on the risk sensitivity parameter $\theta$. With the function $D_\theta$ in (\ref{D}),  we associate a function $U_\theta: \mT^\nu\x \mR \to \mC^{q\x q}$ by
\begin{equation}
\label{U}
  U_\theta(\sigma,\lambda):= -D_\theta(\sigma,\lambda)^{-1}\d_\theta D_\theta(\sigma, \lambda)
\end{equation}
for all $\theta >0$ satisfying (\ref{rad}) (which ensures that $\det D_\theta (\sigma,\lambda)\ne 0$ for all $\sigma \in \mT^\nu$, $\lambda \in \mR$). The following theorem provides a network counterpart of Theorem 9.1 from Ref. \refcite{VPJ_2021} (the latter corresponds formally to the single OQHO case with $\nu=0$).

\begin{theorem}
\label{th:diff}
Under the conditions of Theorem~\ref{th:limXiG}, the QEF growth rate $\Ups(\theta)$ in  (\ref{Ups}) satisfies the differential equation
\begin{equation}
\label{Ups'}
  \Ups'(\theta)
  =
    \frac{1}{2(2\pi)^{\nu+1}}
    \int_{\mT^\nu\x \mR}
    \Tr U_\theta(\sigma,\lambda)
    \rd \sigma\rd \lambda,
\end{equation}
with the initial condition $\Ups(0)=0$. Here, the function (\ref{U})  is computed as
\begin{equation}
\label{U1}
U_\theta
=
\Psi
    (    \Psi\cos(
        \theta \Psi
        ) -
        \Phi
        \sin
        (\theta \Psi)
        )^{-1}
        (\Phi \cos(\theta \Psi)
    +\Psi\sin(\theta \Psi)
    )
\end{equation}
(the arguments  $\sigma$, $\lambda$ are omitted for brevity),
takes values in the subspace of Hermitian matrices of order $n$ and satisfies a Riccati equation
\begin{equation}
\label{U'}
  \d_\theta U_\theta(\sigma,\lambda)
  =
  \Psi(\sigma,\lambda)^2
  +
  U_\theta(\sigma,\lambda)^2,
  \qquad
  \sigma \in \mT^\nu,\
  \lambda \in \mR,
\end{equation}
with the initial condition $U_0 = \Phi$ given by (\ref{Phi}).\hfill$\square$
\end{theorem}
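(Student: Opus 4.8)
The plan is to differentiate the spatio-temporal growth rate formula \eqref{Ups} with respect to $\theta$, reduce the $\theta$-derivative of the integrand to a trace, and then verify separately the closed-form expression \eqref{U1} for $U_\theta$ and the Riccati equation \eqref{U'} it satisfies. First I would observe that, under \eqref{rad}, the matrix $D_\theta(\sigma,\lambda)$ is invertible for every $(\sigma,\lambda)$, so $\ln\det D_\theta$ is smooth in $\theta$, and Jacobi's formula gives $\d_\theta\ln\det D_\theta = \Tr(D_\theta^{-1}\d_\theta D_\theta) = -\Tr U_\theta$ with $U_\theta$ as in \eqref{U}. Differentiation under the integral sign is justified by the same dominated convergence / uniform bound argument that underlies Theorem~\ref{th:limXiG} (the integrand and its $\theta$-derivative are bounded by an integrable function locally uniformly in $\theta$ on the admissible range, using that $\Phi$, $\Psi$ decay in $\lambda$ because $F(\sigma,i\lambda)$ is $O(1/\lambda)$ from \eqref{F}). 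This yields \eqref{Ups'} immediately, and the initial condition $\Ups(0)=0$ follows by setting $\theta=0$ in \eqref{Ups}, since $D_0(\sigma,\lambda) = \cos 0 - 0 = I_q$, whence $\ln\det D_0 = 0$.

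Next I would compute $\d_\theta D_\theta$ explicitly. From \eqref{D}, writing $\Psi$ for $\Psi(\sigma,\lambda)$ and $\Phi$ for $\Phi(\sigma,\lambda)$, and using $\d_\theta\cos(\theta\Psi) = -\Psi\sin(\theta\Psi)$, $\d_\theta\big(\theta\sinc(\theta\Psi)\big) = \d_\theta\big(\frac{\sin(\theta\Psi)}{\Psi}\big)\cdot\text{(formally)}$ — more carefully, $\theta\sinc(\theta\Psi) = \Psi^{-1}\sin(\theta\Psi)$ on the invertible part and extends by continuity, so $\d_\theta\big(\theta\Phi\sinc(\theta\Psi)\big) = \Phi\cos(\theta\Psi)$ since $\Psi$ commutes with $\cos(\theta\Psi)$ and $\sin(\theta\Psi)$. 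Hence $\d_\theta D_\theta = -\Psi\sin(\theta\Psi) - \Phi\cos(\theta\Psi) = -\big(\Phi\cos(\theta\Psi) + \Psi\sin(\theta\Psi)\big)$. Likewise, multiplying \eqref{D} on the left by $\Psi$ gives $\Psi D_\theta = \Psi\cos(\theta\Psi) - \Phi\sin(\theta\Psi)$ (using again that $\Psi$ commutes with $\sinc(\theta\Psi)$ and $\theta\Psi\sinc(\theta\Psi) = \sin(\theta\Psi)$). Therefore
\[
  U_\theta = -D_\theta^{-1}\d_\theta D_\theta = D_\theta^{-1}\big(\Phi\cos(\theta\Psi)+\Psi\sin(\theta\Psi)\big)
  = \big(\Psi D_\theta\big)^{-1}\Psi\big(\Phi\cos(\theta\Psi)+\Psi\sin(\theta\Psi)\big),
\]
which is exactly \eqref{U1} after substituting $\Psi D_\theta = \Psi\cos(\theta\Psi)-\Phi\sin(\theta\Psi)$; care is needed when $\Psi$ is singular, handled by a continuity/limiting argument since both sides are continuous in $(\sigma,\lambda)$.

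Finally I would establish the Riccati equation \eqref{U'}. The slick route is to recognize $U_\theta$ as the logarithmic derivative $-D_\theta^{-1}\d_\theta D_\theta$ and to show $D_\theta$ itself solves a linear second-order ODE in $\theta$. Indeed, from $\d_\theta D_\theta = -(\Phi\cos(\theta\Psi)+\Psi\sin(\theta\Psi))$ one computes $\d_\theta^2 D_\theta = \Phi\Psi\sin(\theta\Psi) - \Psi^2\cos(\theta\Psi)$; combining this with \eqref{D} (multiplied appropriately by $\Psi^2$) should give $\d_\theta^2 D_\theta = -\Psi^2 D_\theta$, i.e. $D_\theta$ satisfies a matrix oscillator equation with "frequency" $\Psi$. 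Then the standard identity relating logarithmic derivatives of solutions of $Y'' = -\Psi^2 Y$ to the Riccati flow yields $\d_\theta U_\theta = \Psi^2 + U_\theta^2$ directly; Hermitian-valuedness of $U_\theta$ follows because $\Psi^2 \preccurlyeq 0$ is Hermitian (as $\Psi$ is skew-Hermitian) and $U_0 = \Phi$ is Hermitian, so the Riccati flow preserves the Hermitian subspace. The initial condition $U_0 = \Phi$ is read off from \eqref{U1} at $\theta=0$, using $\cos 0 = I_q$, $\sin 0 = 0$. \emph{The main obstacle} I anticipate is the bookkeeping around the singularities of $\Psi$ (and of the combination $\Psi\cos(\theta\Psi)-\Phi\sin(\theta\Psi)$): every manipulation that divides by $\Psi$ must be justified either on the dense open set where $\Psi$ is invertible followed by a continuity argument, or rephrased entirely in terms of the entire functions $\cos$, $\sinc$ to avoid inversion; verifying $\d_\theta^2 D_\theta = -\Psi^2 D_\theta$ and the passage to \eqref{U'} cleanly, without spurious inverses, is the delicate part.
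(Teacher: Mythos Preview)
Your overall strategy matches the paper's exactly: differentiate $\ln\det D_\theta$ via Jacobi's formula to get \eqref{Ups'}, compute $\d_\theta D_\theta$ explicitly, show $D_\theta$ satisfies a second-order linear ODE, and read off the Riccati equation for the logarithmic derivative. However, there is a concrete noncommutativity error that would make two of your intermediate steps fail as written.

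You claim that left-multiplying \eqref{D} by $\Psi$ gives $\Psi D_\theta = \Psi\cos(\theta\Psi) - \Phi\sin(\theta\Psi)$. This is false: the term $-\theta\Phi\,\sinc(\theta\Psi)$ becomes $-\theta\Psi\Phi\,\sinc(\theta\Psi)$ under left multiplication, and $\Psi$ does \emph{not} commute with $\Phi$, so you cannot slide it past to form $-\Phi\sin(\theta\Psi)$. The correct identity, which the paper uses, comes from \emph{right} multiplication: $D_\theta\Psi = \cos(\theta\Psi)\Psi - \theta\Phi\,\sinc(\theta\Psi)\Psi = \Psi\cos(\theta\Psi) - \Phi\sin(\theta\Psi)$, since $\Psi$ commutes with its own functions. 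Then $D_\theta^{-1} = \Psi(D_\theta\Psi)^{-1}$, and substituting gives \eqref{U1} with the factor $\Psi$ on the \emph{left} of the inverse, as stated. Your formula $(\Psi D_\theta)^{-1}\Psi(\cdots)$ has $\Psi$ on the wrong side and does not match \eqref{U1}.

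The same left/right issue recurs in your oscillator equation. You write $\d_\theta^2 D_\theta = -\Psi^2 D_\theta$, but from $\d_\theta D_\theta = -\Psi\sin(\theta\Psi) - \Phi\cos(\theta\Psi)$ one gets $\d_\theta^2 D_\theta = -\Psi^2\cos(\theta\Psi) + \Phi\Psi\sin(\theta\Psi) = -D_\theta\Psi^2$, with $\Psi^2$ on the \emph{right}. This side matters: with $U_\theta = -D_\theta^{-1}\d_\theta D_\theta$ one computes $\d_\theta U_\theta = U_\theta^2 - D_\theta^{-1}\d_\theta^2 D_\theta$, and only $\d_\theta^2 D_\theta = -D_\theta\Psi^2$ yields $-D_\theta^{-1}\d_\theta^2 D_\theta = \Psi^2$ and hence \eqref{U'}. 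With your version the last term would be $D_\theta^{-1}\Psi^2 D_\theta \ne \Psi^2$. Fixing the side of multiplication in both places repairs the argument completely; the singular-$\Psi$ issue you flag is then handled, as in the paper, by working with the entire function $\sinc$ throughout.
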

\begin{proof}
The relation (\ref{Ups'}) is obtained by combining (\ref{Ups}) with $(\ln\det D_\theta)' = -\Tr U_\theta$, which follows from (\ref{U})  and the identity $(\ln\det N)' = \Tr (N^{-1}N')$, where $(\cdot)':= \d_\theta(\cdot)$.
Since the function $D_\theta$ in (\ref{D})  admits the representation
\begin{equation}
\label{D1}
    D_\theta
    =
    \cos(
        \theta \Psi
    ) -
        \Phi
        \Psi^{-1}
        \sin
        (\theta \Psi)
\end{equation}
for any $\sigma \in \mT^\nu$, $\lambda\in \mR$ (with $\frac{\sin(\theta z)}{z}$ extended by continuity to $\theta$ at $z=0$), its  derivative with respect to $\theta$ takes the form
\begin{equation}
\label{D'}
    D_\theta'
    =
    -\Psi
    \sin(
        \theta \Psi
    )
    -
    \Phi
        \cos
        (\theta \Psi).
\end{equation}
The equality (\ref{U1}) results from
substitution of (\ref{D1}), (\ref{D'}) into (\ref{U}). By differentiating  (\ref{D'}) in $\theta$, it follows that (\ref{D1})
satisfies the linear second-order  ODE
\begin{equation}
\label{D''}
  D_\theta''
  =
    -\Psi^2
    \cos(
        \theta \Psi
    )
    +
    \Phi
    \Psi
        \sin
        (\theta \Psi)
    =
  - D_\theta \Psi^2,
\end{equation}
with the initial conditions $D_0 = I_q$, $D_0' = -\Phi$. In view of the relation $(N^{-1})' = -N^{-1}N'N^{-1}$,  the differentiation of (\ref{U}) leads to
\begin{equation}
\label{U'1}
    U_\theta'
    =
    -D_\theta^{-1}D_\theta''
    +D_\theta^{-1}D_\theta' D_\theta^{-1}D_\theta'
    =
    \Psi^2 + U_\theta^2,
\end{equation}
which uses (\ref{D''}) and establishes (\ref{U'}). The solution $U_\theta$ of this differential equation inherits the Hermitian property from its initial condition $U_0 = \Phi$, since  $\Psi(\sigma,\lambda) = -\Psi(\sigma,\lambda)^*$ in (\ref{Psi}) for any $\sigma \in \mT^\nu$, $\lambda\in \mR$, and $(N^2)^* = N^2$ for Hermitian or skew  Hermitian matrices $N$.
\end{proof}

The transformation $D_\theta\mapsto U_\theta$ in (\ref{U}), which involves a matrix-valued counterpart  of the logarithmic derivative and relates the quadratically nonlinear Riccati ODE (\ref{U'}) to the linear ODE (\ref{D''}), resembles the Hopf-Cole transformation\cite{C_1951,H_1950} linking the viscous Burgers equation with the heat equation.  The role of (\ref{U}) in (\ref{U'1}) is also similar to that of the logarithmic transformation in dynamic programming
equations for stochastic control\cite{F_1982} (see also Ref. \refcite{VP_2010}).

The right-hand side of (\ref{Ups'}) can be evaluated by numerical integration over the spatio-temporal frequencies
and used for computing (\ref{Ups}) as
$$
    \Ups(\theta)
     =
    \int_0^\theta \Ups'(\vartheta)\rd \vartheta
=
    \frac{1}{2(2\pi)^{\nu+1}}
    \int_{\mT^\nu\x \mR \x [0,\theta]}
    \Tr U_\vartheta(\sigma,\lambda)
    \rd \sigma \rd \lambda \rd \vartheta.
$$
In particular, (\ref{Ups'}) yields
\begin{align}
\nonumber
    \Ups'(0)
    & =
    \frac{1}{2(2\pi)^{\nu+1}}
    \int_{\mT^\nu\x\mR}
    \Tr \Phi(\sigma,\lambda)
    \rd \sigma
    \rd \lambda\\
\label{Ups'0}
    & =
    \frac{1}{2}
    \|F\|_2^2 =
    \frac{1}{2}
    \bE(Z_0(0)^\rT Z_0(0)),
\end{align}
which, in accordance with (\ref{EQGT}), (\ref{EQGT1}),  reproduces the mean square cost rate for the process $Z$ in (\ref{ZX}) in the invariant Gaussian state of the network. In (\ref{Ups'0}), use is also made of a spatio-temporal version
$$
    \|F\|_2
    :=
    \sqrt{
    \frac{1}{(2\pi)^{\nu+1}}
    \int_{\mT^\nu\x\mR}
    \|F(\sigma,i\lambda)\|_\rF^2
    \rd \sigma
    \rd \lambda}
$$
of the Hardy space $\cH_2$-norm  for  the transfer function $F$ in  (\ref{F}) which factorizes $\Phi$ in (\ref{Phi}).

The function $\det (I_q - \theta \Phi(\sigma,\lambda))$ in the classical QEF rate (\ref{Ups*}) is rational with respect to $\lambda$, simplifying the evaluation of the integral. This observation can be combined with the Maclaurin series expansions of the trigonometric functions, which allows (\ref{D}) to be approximated as
\begin{align}
\nonumber
    D_\theta
    & =
    I_q - \frac{1}{2}\theta^2 \Psi^2 - \theta \Phi \Big(I_q - \frac{1}{6}\theta^2 \Psi^2\Big)
    +
    o(\theta^3)\\
\label{Dprox}
    & =
    I_q - \theta \Phi
    -
    \frac{1}{2}\theta^2
    \Big(I_q - \frac{\theta}{3}\Phi\Big)\Psi^2
    +
    o(\theta^3),
        \qquad
    {\rm as}\
    \theta \to 0.
\end{align}
Substitution of (\ref{Dprox}) into (\ref{Ups}) allows the quantum QEF growth rate to be computed approximately through a perturbation of its classical counterpart (\ref{Ups*}):
\begin{align}
\nonumber
  \Ups(\theta)
  = &
  \Ups_*(\theta)\\
\nonumber
  & +
  \frac{\theta^2}{4(2\pi)^{\nu+1}}
  \int_{\mT^\nu \x\mR}
  \Tr
  \Big((I_q - \theta\Phi(\sigma,\lambda))^{-1}\Big(I_q - \frac{\theta}{3}\Phi(\sigma,\lambda)\Big)\Psi(\sigma,\lambda)^2
  \Big)
  \rd \sigma
  \rd \lambda\\
\label{VUps}
    & + o(\theta^3),
    \qquad
    {\rm as}\
    \theta \to 0.
\end{align}
Since $\Psi(\sigma,\lambda)^2\prec 0$ for all $\sigma \in\mT^\nu$, $\lambda \in \mR$, the relation (\ref{VUps}) implies that $\Ups(\theta)< \Ups_*(\theta)$ for all sufficiently small $\theta >0$.

\section{Conclusion}
\label{sec:conc}

We have considered a class of translation invariant networks of multimode OQHOs on a multidimensional lattice, governed by linear QSDEs driven by external quantum fields. The block Toeplitz structure of their  coefficients has been exploited in order to represent the PR conditions in the spatio-temporal frequency domain, relate them with the energy and coupling matrices, and compute the energy parameters for interconnections of networks. Such interconnections  arise in quantum control settings with network  performance specifications including stability and minimization of cost functionals. We have discussed  the invariant Gaussian quantum state for stable networks, driven by vacuum fields,  and a quadratic-exponential cost functional as a risk-sensitive performance criterion for finite fragments of the network over bounded time intervals. This cost gives rise to exponential upper bounds for tail distributions of a quadratic function of network variables weighted by a block Toeplitz matrix. A spatio-temporal frequency-domain formula has been obtained for the asymptotic QEF rate per unit time and per lattice site in the thermodynamic limit of infinite time horizons and unboundedly growing network fragments. This representation involves the quantum spectral density,  associated through the Fourier transform with the invariant quantum covariance kernel of the network variables and factorised by the  spatio-temporal transfer function of the network. We have obtained a  differential equation for the QEF rate as a function of the risk sensitivity parameter and outlined its computation using a homotopy technique and asymptotic expansions. These  results provide a solution of the risk-sensitive performance analysis problem  in the spatio-temporal frequency domain for translation invariant linear quantum stochastic networks, which can be applied to coherent and measurement-based control and filtering settings for such systems with QEF criteria.

\appendix

\section{Block Toeplitz Matrices and Spatial Fourier Transforms}
\label{sec:Toeplitz}

Omitting the dependence on the dimension $\nu$ of the lattice $\mZ^\nu$, which is fixed throughout the paper, we denote by $\fT_{a,b}$ the Banach space of real or complex block Toeplitz matrices $f:= (f_{j-k})_{j,k\in \mZ^\nu}$ (in the sense of the additive group  structure of the lattice $\mZ^\nu$)
 with $(a\x b)$-blocks $f_{j-k}$, endowed with the maximum absolute row (or column) sum norm\cite{HJ_2007}
\begin{equation}
\label{f1}
    \|f\|_1
    :=
    \sup_{j \in \mZ^\nu}
    \sum_{k \in \mZ^\nu}\|f_{j-k}\|
    =
    \sum_{\ell \in \mZ^\nu}\|f_\ell\|
    =
    \sup_{k \in \mZ^\nu}
    \sum_{j \in \mZ^\nu}\|f_{j-k}\|,
\end{equation}
where $\|\cdot\|$ is the operator norm of a matrix.
The condition $\|f\|_1< +\infty$
makes the spatial Fourier transform (SFT)
\begin{equation}
\label{phi}
    F(\sigma): =\sum_{k \in \mZ^\nu}\re^{-ik^\rT \sigma}f_k
\end{equation}
a continuous function of $\sigma \in \mT^\nu$ ($2\pi$-periodic in each if its $\nu$ variables)     in view of the absolute  and uniform summability of the series over the $\nu$-dimensional torus $\mT^\nu$,  with $\mT$ being identified with the interval $[-\pi, \pi)$ in what follows. The torus $\mT^\nu$ is a commutative group with respect to the entrywise addition modulo $2\pi$. The matrix $f \in \fT_{a,b}$ specifies a bounded operator $\ell^2(\mZ^\nu, \mC^b)\to \ell^2(\mZ^\nu, \mC^a)$ for the Hilbert spaces of square summable complex vector-valued  functions on the lattice $\mZ^\nu$.
 The corresponding $\ell^2$-induced operator norm of $f$ satisfies $\|f\|= \max_{\sigma \in \mT^\nu} \|F(\sigma)\| \< \|f\|_1$, where
$\|F(\sigma)\|$ inherits continuous dependence on $\sigma\in \mT^\nu$ from $F(\sigma)$.
The complex conjugate transpose $(\cdot)^* := \overline{(\cdot)}^\rT$ maps $f \in \fT_{a,b}$ to $f^* = (f_{k-j}^*)_{j,k\in \mZ^\nu} \in \fT_{b,a}$ with the SFT $F(\sigma)^*$.
 The product of matrices $f:= (f_{j-k})_{j,k\in \mZ^\nu} \in \fT_{a,b}$ and $g:= (g_{j-k})_{j,k\in \mZ^\nu}\in \fT_{b,c}$ is also a block Toeplitz matrix $h:= fg=(h_{j-k})_{j,k\in \mZ^\nu} \in \fT_{a,c}$ whose blocks are given by the convolutions $h_j = \sum_{k\in \mZ^\nu} f_{j-k}g_k= \sum_{k\in \mZ^\nu} f_kg_{j-k}$ for all $j \in \mZ^\nu$, with $\|h\|_1 \< \|f\|_1\|g\|_1$ in view of the submultiplicativity of the matrix operator norm which is used on the right-hand side of (\ref{f1}).  The corresponding SFT $H(\sigma): =\sum_{k \in \mZ^\nu}\re^{-ik^\rT \sigma}h_k = F(\sigma)G(\sigma)$ is the product of (\ref{phi}) and $G(\sigma): =\sum_{k \in \mZ^\nu}\re^{-ik^\rT \sigma}g_k$ for all $\sigma \in \mT^\nu$. Accordingly,
 $\fT_{a,a}$ is a Banach algebra of block Toeplitz matrices whose multiplication corresponds to the pointwise multiplication  of the SFTs. For any $f \in \fT_{a,a}$, its exponential is also a block Toeplitz matrix $\re^f \in \fT_{a,a}$ which satisfies $\|\re^f\|_1 \< \re^{\|f\|_1}$, and the corresponding SFT is related to (\ref{phi}) by $\re^{F(\sigma)}$.

\section{Proof of Theorem~\ref{th:PR}}
\label{sec:PRproof}

By using the bilinearity of commutators and applying the quantum Ito lemma,  it follows from (\ref{dXj}) that
\begin{align}
\nonumber
    \rd [X_j, X_k^\rT]
    = &
    [\rd  X_j, X_k^\rT]
    +
    [X_j, \rd   X_k^\rT]
    +
    [\rd  X_j, \rd  X_k^\rT]\\
\nonumber
    = &
    \sum_{a \in \mZ^\nu}\Big[ A_{j-a}X_a \rd t + B_{j-a} \rd W_a, X_k^\rT\Big]\\
\nonumber
    & +
    \sum_{b \in \mZ^\nu} \Big[X_j, X_b^\rT A_{k-b}^\rT \rd t + \rd W_b^\rT B_{k-b}^\rT \Big]\\
\nonumber
    &+
    \sum_{a,b \in \mZ^\nu}
    \Big[
    A_{j-a}X_a \rd t + B_{j-a} \rd W_a, X_b^\rT A_{k-b}^\rT \rd t + \rd W_b^\rT B_{k-b}^\rT
    \Big]\\
\label{dXcomm}
    =&
    \sum_{c\in \mZ^\nu}
    (A_{j-c}
    [X_c,X_k^\rT]
    +
    [X_j, X_c^\rT] A_{k-c}^\rT
    +
    2i B_{j-c}J_m B_{k-c}^\rT
    )
    \rd t
\end{align}
for all $j,k\in \mZ^\nu$,
where (\ref{dWcomm}) and the second of the equalities (\ref{WXYdWcomm}) are also used. It follows from (\ref{dXcomm}) that the preservation of the CCRs (\ref{Xcomm}) is equivalent to
\begin{equation}
\label{PR11}
    A_\ell
    \Theta
    +
    \Theta
    A_{-\ell}^\rT
    +
    \sum_{c\in \mZ^\nu}
    B_{\ell + c}J_m B_c^\rT
    =0,
    \qquad
    \ell\in \mZ^\nu,
\end{equation}
since $    \sum_{c\in \mZ^\nu}
    B_{j-c}J_m B_{k-c}^\rT =     \sum_{c\in \mZ^\nu}
    B_{j-k+c}J_m B_c^\rT
$ for all $j,k\in \mZ^\nu$. The first PR condition (\ref{PR1}) is obtained by applying the SFT to (\ref{PR11})  and using (\ref{cAB*}). Now,  in view of (\ref{dX}),
for any $t\> s\> 0$,
\begin{equation}
\label{Xsol}
    X(t) = \re^{(t-s)A} X(s) + \int_s^t \re^{(t-\tau) A}B \rd W(\tau),
\end{equation}
where the integral consists of quantum variables which commute with adapted processes taken at time $s$ (see also (\ref{WXYdWcomm})). Hence,
\begin{equation}
\label{XYXY}
    [X_j(t), Y_k(s)^\rT] = \sum_{\ell\in \mZ^\nu }(\re^{(t-s)A})_{j\ell} [X_\ell(s), Y_k(s)^\rT],
\end{equation}
where $(\re^{\tau A})_{j\ell}$ is the $(j,\ell)$th block of the matrix $\re^{\tau A} \in \fT_{n,n}$ satisfying $\|\re^{\tau A}\|_1 \< \re^{\tau \|A\|_1}$ for any $\tau \> 0$. The relation (\ref{XYXY}) shows that (\ref{XYcomm}) holds if and only if it does so for all $s=t\> 0$.
By considering the processes $X_j$, $Y_k$ at the same moment of time $t\> 0$, it follows from the quantum Ito lemma and (\ref{dWcomm})--(\ref{dYj}), similarly to (\ref{dXcomm}), that
\begin{align}
\nonumber
    \rd [X_j, Y_k^\rT]
     =&
    [\rd  X_j, Y_k^\rT]
    +
    [X_j, \rd   Y_k^\rT]
    +
    [\rd  X_j, \rd  Y_k^\rT]\\
\nonumber
    = &
    \sum_{a \in \mZ^\nu}\Big[ A_{j-a}X_a \rd t + B_{j-a} \rd W_a, Y_k^\rT\Big]\\
\nonumber
    & +
    \sum_{b \in \mZ^\nu} \Big[X_j, X_b^\rT C_{k-b}^\rT \rd t + \rd W_b^\rT D_{k-b}^\rT \Big]\\
\nonumber
    &+
    \sum_{a,b \in \mZ^\nu}
    \Big[
    A_{j-a}X_a \rd t + B_{j-a} \rd W_a, X_b^\rT C_{k-b}^\rT \rd t + \rd W_b^\rT D_{k-b}^\rT
    \Big]\\
\label{dXYcomm}
    =&
    \sum_{c\in \mZ^\nu}
    (A_{j-c}
    [X_c,Y_k^\rT]
    +
    [X_j, X_c^\rT] C_{k-c}^\rT
    +
    2i B_{j-c}J_m D_{k-c}^\rT
    )
    \rd t
\end{align}
for all $j,k\in \mZ^\nu$.
A combination of (\ref{Xcomm}), (\ref{XYcomm}) with (\ref{dXYcomm}) shows that, under the CCRs (\ref{Xcomm}),  the preservation of (\ref{XYcomm}) by the QSDEs (\ref{dXj}), (\ref{dYj}) is equivalent to
\begin{equation}
\label{PR21}
    \Theta C_{-\ell}^\rT
    +
    \sum_{c\in \mZ^\nu}
    B_{\ell + c}J_m D_c^\rT
    =0,
    \qquad
    \ell \in \mZ^\nu.
\end{equation}
The second PR condition (\ref{PR2}) is now obtained by applying the SFT to (\ref{PR21}) and using (\ref{cCD*}). By a reasoning, similar to that in (\ref{XYXY}), a combination of (\ref{Xsol}) with (\ref{XYcomm}) yields
\begin{align}
\nonumber
    [Y_j(t), Y_k(s)^\rT]
    & =
    \Big[
    Y_j(s) + \int_s^t \sum_{\ell \in \mZ^\nu} (C_{j-\ell}X_\ell(\tau)\rd \tau + D_{j-\ell} \rd W_\ell(\tau)),
    Y_k(s)^\rT
    \Big]\\
\nonumber
    & = [Y_j(s),Y_k(s)^\rT]
    +
    \sum_{\ell \in \mZ^\nu}
    C_{j-\ell}
    \int_s^t
    [X_\ell(\tau), Y_k(s)^\rT]
    \rd \tau\\
\label{YjYkcomm}
    &
        =
        [Y_j(s),Y_k(s)^\rT]
\end{align}
for all $j,k\in \mZ^\nu$, $t\> s\> 0$. The relation (\ref{YjYkcomm}) implies that (\ref{Ycomm}) is valid if and only if it holds for all $s=t\> 0$.
By considering the processes $Y_j$, $Y_k$ at the same moment of time $t\> 0$ and combining the quantum Ito lemma with (\ref{dWcomm})--(\ref{dYj}) similarly to (\ref{dXYcomm}), it follows that
\begin{align}
\nonumber
    \rd [Y_j, Y_k^\rT]
    = &
    [\rd  Y_j, Y_k^\rT]
    +
    [Y_j, \rd   Y_k^\rT]
    +
    [\rd  Y_j, \rd  Y_k^\rT]\\
\nonumber
    = &
    \sum_{a \in \mZ^\nu}\Big[ C_{j-a}X_a \rd t + D_{j-a} \rd W_a, Y_k^\rT\Big]\\
\nonumber
    & +
    \sum_{b \in \mZ^\nu} \Big[Y_j, X_b^\rT C_{k-b}^\rT \rd t + \rd W_b^\rT D_{k-b}^\rT \Big]\\
\nonumber
    &+
    \sum_{a,b \in \mZ^\nu}
    \Big[
    C_{j-a}X_a \rd t + D_{j-a} \rd W_a, X_b^\rT C_{k-b}^\rT \rd t + \rd W_b^\rT D_{k-b}^\rT
    \Big]\\
\nonumber
    =&
    \sum_{c\in \mZ^\nu}
    (C_{j-c}
    [X_c,Y_k^\rT]
    +
    [Y_j, X_c^\rT] C_{k-c}^\rT
    +
    2i D_{j-c}J_m D_{k-c}^\rT
    )
    \rd t\\
\label{dYYcomm}
    =&
    2i
    \sum_{c\in \mZ^\nu}
    D_{j-c}J_m D_{k-c}^\rT
    \rd t
\end{align}
for all $j,k\in \mZ^\nu$, where use is also made of the CCRs (\ref{XYcomm}). Therefore, (\ref{dYYcomm}) reproduces the incremental form $  \rd [Y_j, Y_k^\rT]
  =
    [\rd Y_j, \rd Y_k^\rT]
  =
  2i\delta_{jk}
  J_r\rd t
$ of (\ref{Ycomm}) if and only if     $\sum_{c\in \mZ^\nu}
D_{\ell+c}J_m D_{c}^\rT = \delta_{\ell 0} J_r$ for all $\ell \in \mZ^\nu$, which is equivalent to (\ref{PR3}) obtained through the SFT.\hfill$\blacksquare$

\section{Proof of Theorem~\ref{th:inv}}
\label{sec:invproof}

For any fixed but otherwise arbitrary  $u:=(u_k)_{k \in \mZ^\nu} \in \ell^2(\mZ^\nu, \mR^n)$, consider the QCF of the internal network variables  at time $t\> 0$, defined by averaging their unitary Weyl operator\cite{F_1989}:
\begin{equation}
\label{chi}
    \phi(t,u)
    :=
    \bE \re^{i u^\rT X(t)}
    =
    \bE \prod_{k \in \mZ^\nu}
    \re^{i u_k^\rT X_k(t)},
\end{equation}
where the factorisation  comes from the commutativity $[X_j(t),X_k(t)^\rT] = 0$ for different  sites $j\ne k$ of the lattice in view of (\ref{Xcomm}). Similarly to Lemma~1 of Ref. \refcite{VPJ_2018a},
a combination of (\ref{Xsol}) with (\ref{vac})  leads to
\begin{equation}
\label{phit}
    \phi(t,u)
    =
    \phi(0,\re^{tA^\rT}u)
    \re^{-\frac{1}{2}\|u\|_{E(t)}^2},
\end{equation}
where $\|u\|_{E}:= \sqrt{u^\rT E u}$ is a weighted Euclidean norm in $\ell^2(\mZ^\nu, \mR^n)$, specified by a time-varying real positive semi-definite symmetric block Toeplitz matrix
\begin{equation}
\label{Et}
    E(t) := (E_{j-k}(t))_{j,k\in \mZ^\nu} =  \int_0^t \re^{\tau A}BB^\rT \re^{\tau A^\rT} \rd \tau \in \fT_{n,n}
\end{equation}
with the SFT
\begin{equation}
\label{cE}
    \cE_t(\sigma)
     :=
    \sum_{\ell \in \mZ^\nu}
    \re^{-i\ell^\rT \sigma}
    E_\ell(t)=
    \int_0^t
    \re^{\tau\cA(\sigma)}
    \cB(\sigma)\cB(\sigma)^*
    \re^{\tau\cA(\sigma)^*} \rd \tau,
    \qquad
    \sigma \in \mT^\nu.
\end{equation}
Since the SFT  $\cA$ is continuous over the torus $\mT^\nu$,
the condition (\ref{stab}) is equivalent to the matrix $\cA(\sigma)$ being Hurwitz for any $\sigma\in \mT^\nu$ and ensures that (\ref{cE}) has a pointwise limit
\begin{equation}
\label{cElim}
    \cP(\sigma)
    :=
    \lim_{t\to +\infty}
    \cE_t(\sigma)
     =
    \int_0^{+\infty}
    \re^{\tau\cA(\sigma)}
    \cB(\sigma)\cB(\sigma)^*
    \re^{\tau\cA(\sigma)^*} \rd \tau,
    \qquad
    \sigma \in \mT^\nu,
\end{equation}
with the convergence being monotonic in the sense that $\cP(\sigma)\succcurlyeq \cE_t(\sigma)\succcurlyeq \cE_\tau(\sigma)$ for all $t\> \tau\>0$. The matrix $\cP(\sigma)$ in (\ref{cElim})
is a unique solution of the ALE (\ref{cPALE}) and inherits continuity in $\sigma \in \mT^\nu$ from  $\cA$, $\cB$ due to (\ref{stab}) and the representation
\begin{equation}
\label{vec}
\col
    (\cP(\sigma)) = -(\cA(-\sigma) \op \cA(\sigma))^{-1}(\cB(-\sigma)\ox I_n) \col(\cB(\sigma)).
\end{equation}
Here, $\col(\cdot)$ is the columnwise vectorization of matrices\cite{M_1988,SIG_1998},
$\alpha \op \beta = \alpha \ox I + I\ox \beta $ is the Kronecker  sum of matrices $\alpha$, $\beta$, and the relations (\ref{cAB*}) are used.  The matrix $\cA(-\sigma) \op \cA(\sigma)$ in (\ref{vec}) is also Hurwitz (and hence, nonsingular) due to (\ref{stab}). The continuity of the function $\cP$ ensures its square integrability over the torus $\mT^\nu$, thus making $\cP$  a legitimate SFT with square summable Fourier coefficients
\begin{equation}
\label{Pk}
    P_\ell := \frac{1}{(2\pi)^\nu}\int_{\mT^\nu} \re^{i\ell^\rT \sigma} \cP(\sigma)\rd \sigma,
    \qquad
    \ell \in \mZ^\nu,
\end{equation}
which are real matrices due to the Hermitian property,  also inherited by $\cP$ (from $\cA$, $\cB$) as a unique solution of the ALE (\ref{cPALE}): $\overline{\cP(\sigma)} = \cP(-\sigma)$ for all $\sigma \in \mT^\nu$.  The matrices (\ref{Pk}) form a block Toeplitz matrix
\begin{equation}
\label{Elim}
    P :=
    (P_{j-k})_{j,k\in \mZ^\nu}
    =
    \lim_{t\to +\infty} E(t)
    =
    \int_0^{+\infty} \re^{\tau A}BB^\rT \re^{\tau A^\rT} \rd \tau,
\end{equation}
which is the limit of (\ref{Et}) and satisfies $AP + PA^\rT + BB^\rT=0$ whose spatial frequency domain representation is (\ref{cPALE}).  The convergence (\ref{Elim}) is also monotonic: $P \succcurlyeq E(t)\succcurlyeq E(\tau)$ for all $t\> \tau\>0$. This leads to the limit
\begin{equation}
\label{lim2}
  \lim_{t\to +\infty}
  \re^{-\frac{1}{2}\|u\|_{E(t)}^2}
  =
  \re^{-\frac{1}{2}\|u\|_P^2}
\end{equation}
for the second factor on the right-hand side of (\ref{phit}).
Concerning the asymptotic behaviour of the first factor in (\ref{phit}), note that
\begin{align}
\nonumber
    |\phi(0,v)-1|^2
    & = |\bE (\re^{iv^\rT X(0)}-\cI_{\fH})|^2\\
\nonumber
    & \<
    \bE ((\re^{-iv^\rT X(0)}-\cI_{\fH})(\re^{iv^\rT X(0)}-\cI_{\fH}))\\
\nonumber
    & =
    4 \bE ((\sin (v^\rT X(0)/2))^2) \\
\nonumber
    & \<
     \bE ((v^\rT X(0))^2)\\
\label{phidev}
    &
    =
     \|v\|_K^2\< \|K\| |v|^2,
     \qquad
     v \in \ell^2(\mZ^\nu, \mR^n).
\end{align}
Here, the inequality $|\bE \xi|^2 \< \bE (\xi^\dagger\xi)$ for any  quantum variable $\xi$ (with $(\cdot)^\dagger$ the operator adjoint) is applied to $\xi:= \re^{iv^\rT X(0)}-\cI_{\fH}$ and combined with the unitarity of the Weyl operator $\re^{iv^\rT X(0)}$. Also, the inequality $(\sin \eta )^2 \preccurlyeq \eta^2$ for any self-adjoint operator $\eta$ is applied to $\eta:= v^\rT X(0)$ together with (\ref{K}), (\ref{Knorm}). In accordance with (\ref{phit}), the inequality (\ref{phidev}) will subsequently be considered at $v:=\re^{tA^\rT} u$ whose norm satisfies
\begin{equation}
\label{u2}
    |\re^{tA^\rT} u|^2
    =
    \frac{1}{(2\pi)^\nu}
    \int_{\mT^\nu}
    |\re^{t\cA(\sigma)^*}
    \cU(\sigma)|^2
    \rd \sigma,
\end{equation}
where use is made of the Plancherel identity, and
\begin{equation}
\label{cU}
    \cU(\sigma)
    := \sum_{k \in \mZ^\nu} \re^{-ik^\rT \sigma} u_k
\end{equation}
is the SFT of $u$ (the series is convergent in the Hilbert space $L^2(\mT^\nu, \mC^n)$ of square integrable $\mC^n$-valued functions on the torus $\mT^\nu$).
Now, let $\veps>0$ be any positive real number which is small enough in the sense that
\begin{equation}
\label{veps}
    \veps \max_{\sigma \in \mT^\nu}\lambda_{\max}(\cA(\sigma)+\cA(\sigma)^*) \< 1,
\end{equation}
where the continuity of the dependence of the largest eigenvalue $\lambda_{\max}(\cdot)$ on a Hermitian matrix is used.  The condition  (\ref{veps}) is equivalent to
\begin{equation}
\label{gamma}
    \gamma(\sigma)
    :=
    I_n - \veps(\cA(\sigma)+\cA(\sigma)^*)\succcurlyeq 0 ,
    \qquad
    \sigma \in \mT^\nu.
\end{equation}
Hence, the matrix
\begin{align}
\nonumber
    \Gamma(\sigma)
     & :=
    \veps I_n
    +
    \int_0^{+\infty}
    \re^{t\cA(\sigma)}
    \re^{t\cA(\sigma)^*}
    \rd t\\
\label{Gam}
    & =
    \int_0^{+\infty}
    \re^{t\cA(\sigma)}
    \gamma(\sigma)
    \re^{t\cA(\sigma)^*}
    \rd t
    =
    \Gamma(\sigma)^*
\end{align}
satisfies the ALE
\begin{equation}
\label{GALE}
    \cA(\sigma)\Gamma(\sigma) + \Gamma(\sigma)\cA(\sigma)^* + \gamma(\sigma)=0
\end{equation}
and is a continuous Hermitian function of $\sigma\in \mT^\nu$. It follows from (\ref{Gam}) that $\Gamma(\sigma)$ is separated from zero as
\begin{equation}
\label{mineps}
    \min_{\sigma \in \mT^\nu} \lambda_{\min}(\Gamma(\sigma))\> \veps,
\end{equation}
where use is also made of the continuous dependence of the smallest eigenvalue $\lambda_{\min}(\cdot)$ on a Hermitian matrix. A combination of (\ref{GALE}) with (\ref{gamma}) leads to
\begin{align*}
        \d_t(\re^{t\cA(\sigma)}
        \Gamma(\sigma)
    \re^{t\cA(\sigma)^*})
    & =
    \re^{t\cA(\sigma)}
    (\cA(\sigma)\Gamma(\sigma) + \Gamma(\sigma)\cA(\sigma)^*)
    \re^{t\cA(\sigma)^*}\\
    & =
    -
    \re^{t\cA(\sigma)}
    \gamma(\sigma)
    \re^{t\cA(\sigma)^*}
    \preccurlyeq 0 ,
\end{align*}
and hence,
\begin{equation}
\label{upper}
    \re^{t\cA(\sigma)}
        \Gamma(\sigma)
    \re^{t\cA(\sigma)^*}
    \preccurlyeq
    \Gamma(\sigma),
    \qquad
    \sigma \in \mT^\nu,\
    t\> 0.
\end{equation}
It follows from (\ref{mineps}), (\ref{upper}), that the integrand in (\ref{u2}) is bounded above uniformly in $t\> 0$ by a time-independent function:
\begin{align*}
    |\re^{t\cA(\sigma)^*}
    \cU(\sigma)|^2
    & =
    \cU(\sigma)^*
    \re^{t\cA(\sigma)}
    \re^{t\cA(\sigma)^*}
    \cU(\sigma)\\
    & \<
    \frac{1}{\veps}
    \cU(\sigma)^*
    \re^{t\cA(\sigma)}
    \Gamma(\sigma)
    \re^{t\cA(\sigma)^*}
    \cU(\sigma)    \\
    & \<
    \frac{1}{\veps}
    \cU(\sigma)^*
    \Gamma(\sigma)
    \cU(\sigma)   ,
    \qquad
    \sigma \in \mT^\nu,
\end{align*}
which is integrable since
\begin{align*}
    \int_{\mT^\nu}
        \cU(\sigma)^*
    \Gamma(\sigma)
    \cU(\sigma)
    \rd \sigma
    & \<
    \max_{s\in \mT^\nu} \lambda_{\max}(\Gamma(s))
    \int_{\mT^\nu}
    |\cU(\sigma)|^2
    \rd \sigma\\
    & =
    (2\pi)^\nu
    |u|^2
    \max_{s\in \mT^\nu} \lambda_{\max}(\Gamma(s))
    <+\infty
\end{align*}
in view of the continuity of $\Gamma$ on the torus $\mT^\nu$ combined with the Plancherel identity for $u\in \ell^2(\mZ^\nu, \mR^n)$ and the SFT (\ref{cU}). The pointwise convergence $\lim_{t\to +\infty} |\re^{t\cA(\sigma)^*}\cU(\sigma)| = 0$ for any $\sigma \in \mT^\nu$ due to (\ref{stab}) and application of the Lebesgue dominated convergence theorem to (\ref{u2}) yield
$$
    \lim_{t\to +\infty}
    |\re^{tA^\rT} u|
    =
    0,
$$
which, in combination with (\ref{phidev}), implies that $|\phi(0,\re^{tA^\rT}u)-1|
\< \sqrt{\|K\|} |\re^{tA^\rT}u|\to 0$ as $t\to +\infty$, and hence,
\begin{equation*}
\label{limphi1}
    \lim_{t\to +\infty} \phi(0,\re^{tA^\rT}u) = 1.
\end{equation*}
By combining this convergence with (\ref{lim2}),
it follows from  (\ref{phit}) that the QCF (\ref{chi}) is pointwise convergent:
\begin{equation}
\label{limphi}
    \lim_{t\to +\infty}
    \phi(t,u)
    =
    \re^{-\frac{1}{2}\|u\|_P^2},
        \qquad
    u \in \ell^2(\mZ^\nu, \mR^n).
\end{equation}
The right-hand side of (\ref{limphi}) is the QCF of the zero-mean Gaussian quantum state with the real covariances $\Re \bE(X_jX_k^\rT) = P_{j-k}$ in (\ref{EXX}), and hence, there holds weak\cite{B_1968} convergence to this invariant state for the network variables.   The imaginary part $\Im \bE(X_jX_k^\rT) = \frac{1}{2} \Im \bE [X_j, X_k^{\rT}] = \delta_{jk}\Theta$ in (\ref{EXX}) comes  from the CCRs (\ref{Xcomm}),  which are preserved over the course of time regardless of the quantum state.\hfill$\blacksquare$

\section{Averaging for Trace-Analytic Functionals of Integral Operators}
\label{sec:averint}

Each of the operators $\sV_{G,T}$ in (\ref{sVGT}) and $\sL_{G,T} $ in  (\ref{sLGT})
is organised as an integral operator $\sF_{G,T}$ on $L^2([0,T],\mC^G)$ whose kernel $F_{G,T}: [0,T]^2\to \mC^{G\x G}$ is obtained from an absolutely integrable  continuous function $f_G: \mR\to \mC^{G \x G}$  as $F_{G,T}(t,\tau):= f_G(t-\tau)$ for all  $0\< t,\tau\< T$.
The ``rightward'' product $\sF_{G,T} := \rprod_{k=1}^N \sF_{G,T}^{(k)}$ of any number $N$ of such operators  with the kernel functions $F_{G,T}^{(k)}: [0,T]^2 \to \mC^{G\x G}$, generated by absolutely integrable  continuous functions $f_{G}^{(k)}: \mR \to \mC^{G\x G}$ as above, $k=1, \ldots, N$, is an integral operator whose kernel  is an appropriately constrained convolution  $F_{G,T}(t,\tau):= \int_{[0,T]^{N-1}} f_G^{(1)}(t-\tau_1) f_G^{(2)}(\tau_1-\tau_2)\x\ldots \x f_G^{(N)}(\tau_{N-1}-\tau)   \rd \tau_1\x \ldots \x \rd \tau_{N-1}$ for all $0\< t,\tau\< T$. The trace of this operator can be computed as\cite{B_1988,RS_1980}
\begin{equation}
\label{TrFT}
    \Tr \sF_{G,T}
    =
    \int_0^T
    \Tr F_{G,T}(t,t)
    \rd t
    =
    \int_{[0,T]^N}
    \Tr
    \rprod_{k=1}^N
    f_G^{(k)} (t_k - t_{k+1})
    \rd t_1\x \ldots \x \rd t_N,
\end{equation}
where $t_{N+1}:= t_1$.
Application of Lemma~6 from Appendix~C of Ref. \refcite{VPJ_2018a} to (\ref{TrFT}) leads to
\begin{equation}
\label{lim}
    \lim_{T\to +\infty}
    \Big(
        \frac{1}{T}
        \Tr \sF_{G,T}
    \Big)
    =
    \frac{1}{2\pi}
    \int_{\mR}
    \Tr
    \rprod_{k=1}^N
    \Phi_G^{(k)}(\lambda)
    \rd \lambda,
\end{equation}
where $\Phi_G^{(k)}(\lambda):= \int_\mR \re^{-i\lambda t }f_G^{(k)}(t)\rd t$ is the Fourier transform of the kernel function $f_G^{(k)}$. In turn, (\ref{lim}) extends from monomials to holomorphic functions $h$ of $N$ complex variables\cite{H_1990} evaluated at the integral operators $\sF_{G,T}^{(k)}$:
\begin{equation}
\label{lim1}
    \lim_{T\to +\infty}
    \Big(
        \frac{1}{T}
        \Tr h
        \big(
            \sF_{G,T}^{(1)}, \ldots, \sF_{G,T}^{(N)}
        \big)
    \Big)
    =
    \frac{1}{2\pi}
    \int_{\mR}
    \Tr
    h(
    \Phi_G^{(1)}(\lambda),
    \ldots,
    \Phi_G^{(N)}(\lambda)
    )
    \rd \lambda,
\end{equation}
provided both sides of (\ref{lim1}) use the same extension of $h$ to noncommutative variables (such extensions are, in general, not unique).

\section{An Averaging Lemma for Block Toeplitz Matrices}
\label{sec:aver}

The following lemma and its corollary are used for computing the infinite spatio-temporal horizon growth rate in Theorem~\ref{th:limXiG}.

\begin{lemma}
\label{lem:vanhove}
For any $N=1,2,3,\ldots$ and any complex block Toeplitz matrices $f^{(s)}:= (f_{j-k}^{(s)})_{j,k\in \mZ^\nu} \in \fT_{n,n}$ with $\mC^{q\x q}$-valued SFTs
\begin{equation}
\label{cFs}
    \cF_s(\sigma)
    :=
    \sum_{\ell \in \mZ^\nu}
    \re^{-i\ell^\rT \sigma}
    f_\ell^{(s)},
    \qquad
    \sigma \in \mT^\nu, \
    s = 1, \ldots, N,
\end{equation}
the following averaging relation holds for the restrictions $f_G^{(s)}:= (f_{j-k}^{(s)})_{j,k\in G} \in \mC^{G\x G}$:
\begin{equation}
\label{limG}
    \lim_{G\to \infty}
    \Big(
        \frac{1}{\#G}
        \Tr
        \rprod_{s=1}^N
        f_G^{(s)}
    \Big)
    =
    \frac{1}{(2\pi)^\nu}
    \int_{\mT^\nu}
    \Tr
        \rprod_{s=1}^N
        \cF_s(\sigma)
    \rd \sigma,
\end{equation}
where the limit is in the sense of (\ref{Ginf}).\hfill$\square$
\end{lemma}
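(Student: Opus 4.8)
The plan is to expand the trace on the left-hand side of (\ref{limG}) directly in terms of the blocks $f^{(s)}_\ell$, to recognise the emerging coefficients as normalised cardinalities of intersections of translated copies of $G$, and to combine the convergence of these coefficients (guaranteed by the van Hove-type condition (\ref{Ginf})) with absolute summability of the remaining series (guaranteed by block Toeplitz summability). To begin, using that the $(j,k)$-block of $f_G^{(s)}$ equals $f^{(s)}_{j-k}$,
\begin{equation*}
    \Tr \rprod_{s=1}^N f_G^{(s)}
    =
    \sum_{j_1,\ldots,j_N\in G}
    \Tr\big(
        f^{(1)}_{j_1-j_2}
        f^{(2)}_{j_2-j_3}
        \cdots
        f^{(N)}_{j_N-j_1}
    \big),
    \qquad j_{N+1}:=j_1.
\end{equation*}
Substituting $\ell_s:= j_s - j_{s+1}$ for $s=1,\ldots,N-1$ and writing $m_s:= \ell_1+\cdots+\ell_s$, $m_0:=0$, we get $j_s = j_1 - m_{s-1}$ and $j_N-j_1 = -m_{N-1}$, so the condition $j_1,\ldots,j_N\in G$ reads $j_1\in\bigcap_{s=0}^{N-1}(G+m_s)$ (for fixed finite $G$ only finitely many tuples $(\ell_1,\ldots,\ell_{N-1})$ contribute, since they must lie in $G-G$), and therefore
\begin{equation*}
    \frac{1}{\#G}
    \Tr \rprod_{s=1}^N f_G^{(s)}
    =
    \sum_{\ell_1,\ldots,\ell_{N-1}\in\mZ^\nu}
    c_G(\ell_1,\ldots,\ell_{N-1})\,
    \Tr\big(
        f^{(1)}_{\ell_1}\cdots f^{(N-1)}_{\ell_{N-1}}f^{(N)}_{-m_{N-1}}
    \big),
    \quad
    c_G
    :=
    \frac{\#\big(\bigcap_{s=0}^{N-1}(G+m_s)\big)}{\#G}.
\end{equation*}

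Next I would show that $c_G\to 1$ pointwise in $(\ell_1,\ldots,\ell_{N-1})$ as $G\to\infty$. Indeed $0\< c_G\< 1$, while subadditivity of cardinality gives
\begin{equation*}
    \#G - \#\Big(\bigcap_{s=0}^{N-1}(G+m_s)\Big)
    =
    \#\Big(\bigcup_{s=0}^{N-1}\big(G\setminus(G+m_s)\big)\Big)
    \<
    \sum_{s=0}^{N-1}\#\big(G\setminus(G+m_s)\big)
    =
    \#G\sum_{s=0}^{N-1}\Delta_G(m_s),
\end{equation*}
whence $1-c_G\<\sum_{s=0}^{N-1}\Delta_G(m_s)\to 0$, since $\Delta_G(0)=0$ and $\Delta_G(m_s)\to 0$ by (\ref{Ginf}) for each fixed $m_s\in\mZ^\nu$. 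Meanwhile, the right-hand side of (\ref{limG}) is precisely $\Tr g_0$, where $g:= f^{(1)}\cdots f^{(N)}$ is the block Toeplitz product with central block $g_0 = \sum_{\ell_1,\ldots,\ell_{N-1}}f^{(1)}_{\ell_1}\cdots f^{(N-1)}_{\ell_{N-1}}f^{(N)}_{-m_{N-1}}$ and SFT $\rprod_{s=1}^N\cF_s$, by the product/convolution correspondence of \ref{sec:Toeplitz}; the inverse SFT then gives $\Tr g_0=\frac{1}{(2\pi)^\nu}\int_{\mT^\nu}\Tr\rprod_{s=1}^N\cF_s(\sigma)\,\rd\sigma$. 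So (\ref{limG}) reduces to interchanging $\lim_{G\to\infty}$ with the above series.

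To justify that interchange I would invoke dominated convergence for series: since $0\<c_G\<1$, the summand indexed by $(\ell_1,\ldots,\ell_{N-1})$ is majorised, uniformly in $G$, by $q\,\prod_{s=1}^{N-1}\|f^{(s)}_{\ell_s}\|\cdot\|f^{(N)}_{-m_{N-1}}\|$, and on introducing the constrained index $\ell_N:=-m_{N-1}$ (so that $\ell_1+\cdots+\ell_N=0$) this majorant sums to at most
\begin{equation*}
    q\sum_{\ell_1,\ldots,\ell_N\in\mZ^\nu}\prod_{s=1}^N\|f^{(s)}_{\ell_s}\|
    =
    q\prod_{s=1}^N\Big(\sum_{\ell\in\mZ^\nu}\|f^{(s)}_\ell\|\Big)
    =
    q\prod_{s=1}^N\|f^{(s)}\|_1
    <+\infty,
\end{equation*}
because each $f^{(s)}$ belongs to a Banach algebra of block Toeplitz matrices, so $\|f^{(s)}\|_1<\infty$ in the sense of (\ref{f1}). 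Combined with $c_G\to 1$ pointwise, the dominated convergence theorem then yields (\ref{limG}). The step I expect to be the main obstacle is the bookkeeping in the reparametrisation $(j_1,\ldots,j_N)\leftrightarrow(j_1;\ell_1,\ldots,\ell_{N-1})$ together with the correct identification of $c_G$ as a normalised intersection cardinality; once that is in place, $c_G\to1$ follows from elementary subadditivity and (\ref{Ginf}), the identification of the limit uses only the SFT machinery of \ref{sec:Toeplitz}, and the passage to the limit is a routine dominated-convergence argument — one need only track absolute summability via $\prod_s\|f^{(s)}\|_1<\infty$ to legitimise all rearrangements of these multiply indexed sums.
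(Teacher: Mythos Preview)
Your proposal is correct and follows essentially the same approach as the paper's own proof: both expand the trace as a cyclic sum over $G^N$, reparametrise by the successive differences (your $\ell_s$, the paper's $z_s$) with the constraint that they sum to zero, identify the resulting coefficient $c_G$ (the paper calls it $h_{G,N}$) as a normalised intersection cardinality, bound $1-c_G$ by $\sum_s \Delta_G(m_s)$ via subadditivity, and then apply dominated convergence using the majorant $q\prod_s\|f^{(s)}\|_1$ together with the SFT inversion for the limit identification. The only cosmetic differences are that the paper treats $N=1$ separately and writes the dimension constant as $n$ rather than $q$ (a notational inconsistency in the statement itself).
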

\begin{proof}
If $N=1$, then (\ref{limG}) reduces to the identity $\Tr f_0^{(1)} =     \frac{1}{(2\pi)^\nu} \int_{\mT^\nu} \Tr \cF_1(\sigma) \rd \sigma$ which follows from the SFT inversion applied to  (\ref{cFs}). Now, assuming that $N>1$,
\begin{align}
\nonumber
        \frac{1}{\#G}
        \Tr
        \rprod_{s=1}^N
        f_G^{(s)}
        & =
        \frac{1}{\#G}
        \sum_{k_1, \ldots, k_N \in G,\ k_{N+1}=k_1}
        \Tr
        \rprod_{s=1}^N
        f_{k_s-k_{s+1}}^{(s)}        \\
\label{lim3}
        & =
        \sum_{z_1, \ldots, z_N \in \mZ^\nu:\, z_1 + \ldots + z_N = 0 }
        h_{G,N}(z_1, \ldots, z_{N-1})
        \Tr
        \rprod_{s=1}^N
        f_{z_s}^{(s)}.
\end{align}
Here,
$$
    h_{G,N}(z_1, \ldots, z_{N-1})
    :=
    \frac{1}{\#G}
    \#
    \left(
    G
    \bigcap
    \bigcap_{s=1}^{N-1}
    \Big(
        G + \sum_{k=1}^s z_k
    \Big)
    \right)
    \in [0,1]
$$
for all $    z_1, \ldots, z_{N-1}\in \mZ^\nu$
admits the bound
$$
    1- h_{G,N}(z_1, \ldots, z_{N-1})
     =
        \frac{1}{\#G}
    \#
    \bigcup_{s=1}^{N-1}
    \Big(
    G\setminus
    \Big(
        G + \sum_{k=1}^s z_k
    \Big)
    \Big)\<
    \sum_{s=1}^{N-1}
    \Delta_G
    \Big(
        \sum_{k=1}^s z_k
    \Big)
$$
(which becomes an equality at $N=2$) in terms of (\ref{Ginf}), whereby
\begin{equation}
\label{hlim}
    \lim_{G\to \infty}
    h_{G,N}(z_1, \ldots, z_{N-1}) = 1,
    \qquad
    z_1, \ldots, z_{N-1}\in \mZ^\nu.
\end{equation}
Since
\begin{equation}
\label{abs}
        \sum_{z_1, \ldots, z_N \in \mZ^\nu}
        \Big|
        \Tr
        \rprod_{s=1}^N
        f_{z_s}^{(s)}
        \Big|
        \<
        n
        \rprod_{s=1}^N
        \|f^{(s)}\|_1<+\infty
\end{equation}
in view of (\ref{f1}),
then, by the Lebesgue dominated convergence theorem, the relation (\ref{hlim}) leads to the following limit for (\ref{lim3}):
\begin{align}
\nonumber
    \lim_{G\to \infty}
    \Big(
        \frac{1}{\#G}
        \Tr
        \rprod_{s=1}^N
        f_G^{(s)}
    \Big)
        & =
        \sum_{z_1, \ldots, z_N \in \mZ^\nu:\, z_1 + \ldots + z_N = 0 }
        \Tr
        \rprod_{s=1}^N
        f_{z_s}^{(s)}\\
\label{lim4}
    & =
        \frac{1}{(2\pi)^\nu}
    \int_{\mT^\nu}
    \Tr
        \rprod_{s=1}^N
        \cF_s(\sigma)
    \rd \sigma,
\end{align}
which establishes (\ref{limG}).  The  last equality in (\ref{lim4}) follows from the identity
\begin{align*}
    \int_{\mT^\nu}
        \rprod_{s=1}^N
        \cF_s(\sigma)
    \rd \sigma
    & =
    \sum_{z_1, \ldots, z_N \in \mZ^\nu}
    \int_{\mT^\nu}
    \re^{-i(z_1 + \ldots + z_N)^\rT \sigma}
    \rd \sigma
        \rprod_{s=1}^N
        f_{z_s}^{(s)}\\
        & =
        (2\pi)^\nu
                \sum_{z_1, \ldots, z_N \in \mZ^\nu:\, z_1 + \ldots + z_N = 0 }\
        \rprod_{s=1}^N
        f_{z_s}^{(s)},
\end{align*}
whose right-hand side is an absolutely summable series by the same reasoning as in (\ref{abs}).
\end{proof}

Similarly to (\ref{lim1}), the relation (\ref{limG}) extends from monomials to holomorphic functions $h$ of $N$ complex variables evaluated at the matrices $f_G^{(k)}$:
\begin{equation}
\label{limhG}
    \lim_{G\to \infty}
    \Big(
        \frac{1}{\#G}
        \Tr h
        \big(
            f_G^{(1)}, \ldots, f_{G}^{(N)}
        \big)
    \Big)
    =
    \frac{1}{(2\pi)^\nu}
    \int_{\mT^\nu}
    \Tr
    h(
    \cF_1(\sigma),
    \ldots,
    \cF_N(\sigma)
    )
    \rd \sigma,
\end{equation}
where both sides use the same extension of $h$ to noncommutative variables.

\section*{Acknowledgement}

This work is supported by the Australian Research Council grant DP210101938.



\end{document}